\documentclass[11pt]{article}

\usepackage[a4paper,margin=1in]{geometry}
\usepackage{amsmath,amssymb,amsthm,mathtools}
\usepackage{titlesec}
\usepackage{bm}
\usepackage{booktabs}
\usepackage{graphicx}
\usepackage{microtype}
\usepackage{xcolor}
\usepackage{hyperref}
\usepackage{tikz}
\usetikzlibrary{arrows.meta,calc,positioning}

\hypersetup{
  colorlinks=true,
  linkcolor=blue!55!black,
  citecolor=blue!55!black,
  urlcolor=blue!55!black
}

\newtheorem{proposition}{Proposition}[section]
\newtheorem{corollary}{Corollary}[section]
\newtheorem{remark}{Remark}[section]

\newcommand{\R}{\mathbb{R}}
\newcommand{\E}{\mathbb{E}}
\newcommand{\Var}{\operatorname{Var}}
\newcommand{\Cov}{\operatorname{Cov}}
\newcommand{\op}{O_{\mathrm p}}
\newcommand{\ip}[2]{\left\langle #1,#2\right\rangle}
\newcommand{\thetahat}{\widehat{\bm\theta}}
\newcommand{\thetao}{\bm\theta_0}
\newcommand{\dtheta}{\bm\delta}

\title{\textbf{The Greedy Bump Bias: Local Profiling Geometry and the Look-Elsewhere Effect}}

\author{
Tommaso Dorigo\\[0.5em]
\small INFN, Sezione di Padova, Padova, Italy;\\
\small Department of Computer Science, Electrical and Space Engineering,\\
\small Luleå University of Technology, Luleå, Sweden
}
\date{\today}

\begin{document}

\maketitle

\begin{abstract}

When fitting a localized signal whose position or shape is not known in
advance, one typically allows these parameters to vary together with
the signal amplitude and chooses the values that maximize the
likelihood.  This freedom has two related statistical consequences.
If a genuine signal is present, its fitted amplitude will be affected by a positive bias; and if no signal is present, the same freedom increases the chance of finding an unusually signal-like background fluctuation, giving rise to the multiple-testing problem known in particle physics as the look-elsewhere effect.  We show that these two
effects can be understood as consequences of the same local geometry
of the family of signal templates.

We study this connection in a Gaussian matched-filter model, where a
smooth $D$-dimensional family of normalized templates describes the
unknown signal location or shape.  For a strong signal of fixed-template signal-to-noise ratio $Q$, profiling over the $D$ coordinates gives
\[
  \Delta q
  \equiv
  q_{\rm prof}-q_{\rm fixed}
  \xrightarrow{d}\chi^2_D,
  \qquad
  \E_Q[\widehat Q-Q]
  =
  \frac{D}{2Q}+o(Q^{-1}).
\]
Under the background-only hypothesis, the same $D$ local directions
make the null trials factor---the ratio of the global to the
fixed-template tail probability---grow at high threshold as
\[
  {\rm TF}_0(u)
  =
  K_{\mathcal M}u^D[1+o(1)].
\]
In the normalized matched-filter problem, the curvature of a genuine
signal peak and the fluctuations that determine the curvature of a
high background peak are governed by the same template metric.  This
allows us to derive the explicit asymptotic relation
\[
  \E_Q[\widehat Q-Q]
  =
  \frac12
  \left.
  \frac{d}{du}\log {\rm TF}_0(u)
  \right|_{u=Q}
  +o(Q^{-1}),
\]
with an analogous relation for the mean profile-likelihood gain.

We then follow the problem away from the strong-signal and
high-threshold limits.  Separating the signal-associated maximum from
the best competing maximum gives an exact decomposition of the global
bias into a local profiling contribution and a contribution from
remote-peak competition.  At finite threshold, background maxima are
not exactly a Poisson process: factorial cumulants quantify the
resulting corrections to the probability of finding no competing
peak.  In a one-dimensional Gaussian example, the second factorial
cumulant accounts for most of this correction, while the third brings
the prediction into close agreement with simulation.  A two-point
Kac--Rice calculation reproduces the second cumulant and reveals a
quartic short-distance suppression of nearby maxima.  The resulting
picture separates the roles of local dimension, model-dependent
curvature, and global extremal competition within a common framework.

\end{abstract}
\clearpage

\section{Introduction}
\label{sec:intro}

Searches for localized signals frequently involve model parameters that are completely unknown before a signal is found and measured.  The canonical example in high-energy physics (HEP) is a resonance search: the general procedure there is to fit for the signal amplitude while also allowing the resonance mass to vary; usually any additional shape parameters, such as the resonance width, are profiled in the fit.  
Closely related scan-and-optimize problems arise in source searches and
matched filtering~\cite{SiegmundWorsley1995,Zubeldia2021,Whitehorn2016},
in searches for spectral components of unknown
frequency~\cite{Davies1987}, and in change-point and scan-statistic
problems~\cite{SiegmundYakir2000}.


The statistical consequences of the profiling operation over additional parameters are usually discussed for the \emph{testing} stage.  Under the background-only hypothesis, the location of a putative signal is not identifiable, so the significance of a local excess must be corrected for the possibility that a similar fluctuation could have appeared elsewhere.  In HEP this is widely known as the look-elsewhere effect (LEE).  Building on Davies’ original Gaussian-process treatment and its later extension to chi-square processes
\cite{Davies1977,Davies1987}, Gross and Vitells gave an efficient
high-threshold treatment of the one-dimensional resonance-search problem and showed, in particular, that the trials factor grows asymptotically linearly with the local significance \cite{GrossVitells2010}.  Vitells and Gross subsequently generalized the random-field treatment to multidimensional searches \cite{VitellsGross2011}.

There is, however, a closely related \emph{estimation} effect.
In an unpublished internal CDF study carried out in 2003 and later
described in a blog post~\cite{Dorigo2009}, we observed that the fitted
yield of a Gaussian signal was biased upward when its position was
allowed to float.  The size of the effect decreased with increasing
signal-to-noise ratio.  At the time, unaware of the existing
literature reviewed in Sec.~\ref{sec:related}, we referred to this effect as the
\emph{greedy bump bias}.

The underlying intuition is simple.  Even when the signal model is
correct, its position or shape is estimated from the same noisy data as
its amplitude.  Allowing these parameters to vary lets the fitted
template adapt to the particular noise realization: it may move toward
a positive fluctuation or away from a negative one.  The displacement
itself has no preferred direction, but it is chosen so as to improve
the fit.  Consequently the corresponding change in the optimized peak
height is positive on average.  As shown below, locally this mechanism
has a precise form: a zero-mean noise gradient determines the fitted
displacement, and maximization converts it into a positive quadratic
gain.  The original CDF study established the effect numerically but
did not give an analytic account of this mechanism.

The purpose of the present work is to investigate whether the greedy
bump bias and the local geometry underlying the LEE are two
manifestations of the same profiling operation.  Our starting point is
deliberately simple: a Gaussian-noise template model in which the signal
amplitude is a radial coordinate and the unknown location or shape
parameters span a smooth template manifold.  In the whitened
data space, the possible signal mean vectors therefore form a cone over
the normalized template manifold.  Away from the cone apex, where a
signal is present, the model is locally regular and can be analyzed
through its tangent space.  At the apex, corresponding to the null
hypothesis, all template coordinates collapse to the same mean vector
and become unidentifiable.


Our first goal is local.  We ask how profiling over $D$ smooth signal coordinates changes the fitted amplitude and the likelihood gain when the signal is strong enough that the optimizer remains near the true
template.  We denote by $Q$ the true signal amplitude in units of the fixed-template standard deviation; in the Gaussian model used below, $Q$ is therefore the fixed-template signal-to-noise ratio.

Recasting known alternative-side peak asymptotics in the template
geometry used here, profiling exposes $D$ Gaussian tangent directions.
If
\[
  q \equiv 2\log\frac{L(\widehat{\text{signal model}})}
                         {L(\text{background-only model})}
\]
denotes the usual twice-log-likelihood-ratio statistic, then the gain
obtained by allowing the signal location or shape to vary,
\[
  \Delta q
  \equiv
  q_{\rm prof}-q_{\rm fixed},
\]
satisfies
\[
  \Delta q \;\xrightarrow{d}\; \chi^2_D,
\]
while the fitted signal amplitude obeys
\[
  \E[\widehat Q-Q]
  =
  \frac{D}{2Q}+o(Q^{-1}).
\]
Thus, even for a correctly specified signal, estimating its location or
shape from the same noisy data produces a positive fitted-amplitude
bias.  For $D=1$ this reproduces the basic greedy-bump effect; for
$D=2$ the leading bias doubles and the profile gain tends to a
$\chi^2_2$ distribution.

The broader question is how this local correspondence connects to the global search problem.  A strong signal anchors the optimizer near a distinguished template point, so the leading bias depends on local dimension but not on the total search volume.  As the signal weakens, however, a noise maximum elsewhere in the search region can overtake
the signal-associated peak and the global search geometry re-enters.
We therefore distinguish, within each realization, the local maximum associated with the signal from the highest competing maximum.  Their competition gives an exact decomposition of the global bias and makes the local-to-global crossover directly measurable.

\begin{figure}[p]
\centering
\begin{tikzpicture}[
    x=1cm,y=1cm,
    >=Latex,
    every node/.style={font=\small},
    paneltitle/.style={font=\bfseries},
    axis/.style={thick},
    vec/.style={-Latex,thick},
    guide/.style={dashed,gray!70},
    mylabel/.style={fill=white, inner sep=1.5pt}
]

\begin{scope}[shift={(0,8.1)}]

\node[paneltitle, anchor=west] at (0,6.35)
    {(a) Signal cone and local tangent geometry};

\coordinate (Apex)  at (5.7,0.15);
\coordinate (BaseC) at (5.7,4.65);

\draw[fill=blue!8, draw=blue!50!black] (BaseC) ellipse (2.0 and 0.48);

\filldraw[fill=blue!12, draw=blue!50!black]
    ($(BaseC)+(-2.0,0)$) -- (Apex) -- ($(BaseC)+(2.0,0)$)
    arc[start angle=0,end angle=180,x radius=2.0,y radius=0.48]
    -- cycle;

\draw[draw=blue!50!black]
    ($(BaseC)+(2.0,0)$) arc[start angle=0,end angle=-180,x radius=2.0,y radius=0.48];

\node[align=center] at (5.7,5.55)
    {$\mathcal C=\{\mu\,s(\theta):\mu\ge 0,\ \theta\in\mathcal M\}$};

\fill (Apex) circle (1.2pt);
\node[anchor=east] at ($(Apex)+(-0.12,0.02)$) {$H_0$};
\node[anchor=west] at ($(Apex)+(0.12,-0.35)$) {$\mu=0$};

\coordinate (P) at (6.65,3.15);

\coordinate (T1) at ($(P)+(-0.30,0.22)$);
\coordinate (T2) at ($(P)+(0.62,0.10)$);
\coordinate (T3) at ($(P)+(0.92,-0.25)$);
\coordinate (T4) at ($(P)+(0.00,-0.14)$);

\filldraw[fill=orange!22, draw=orange!70!black]
    (T1)--(T2)--(T3)--(T4)--cycle;

\fill[red!80!black] (P) circle (1.6pt);

\draw[vec, red!80!black] (Apex) -- (P);
\node[align=center] at (5.65,2.55) {amplitude\\ $\mu$};

\node[anchor=north] at ($(P)+(0.,+0.85)$) {$Q\,s(\theta_0)$};

\coordinate (MidTopRight) at ($(T2)!0.45!(T3)$);
\coordinate (MidBottomRight) at ($(T3)!0.48!(T4)$);

\draw[vec, orange!70!black] (MidTopRight) -- ++(1.20,-0.1)
    node[above right] {$\partial_{\theta_1}s$};

\draw[vec, orange!70!black] (MidBottomRight) -- ++(0.62,-0.95)
    node[below right, align=left] {$\partial_{\theta_2}s,\ldots,$\\[-2pt]$\partial_{\theta_D}s$};

\node[anchor=west, align=left] at (8.75,4.10)
    {local tangent space};

\node[align=center, text width=3.5cm] at (1.65,3.15)
    {for $\mu>0$: local regular model,\\ profiling explores $D$ tangent directions};

\node[align=center, text width=3.3cm] at (1.65,1.05)
    {at the apex: $\theta$ is\\ unidentifiable under $H_0$};

\end{scope}

\begin{scope}[shift={(0,0)}]

\node[paneltitle, anchor=west] at (0,5.6)
    {(b) Local maximum versus remote competitors};

\draw[axis, -Latex] (0.8,1.15) -- (11.35,1.15) node[below right] {$\theta$};
\draw[axis, -Latex] (0.8,1.15) -- (0.8,4.8) node[left] {$Z(\theta)$};

\fill[green!12] (4.15,1.15) rectangle (6.15,4.35);

\draw[thick, blue!65!black]
plot[smooth] coordinates {
    (1.10,1.50)
    (1.60,2.20)
    (2.00,1.55)
    (2.55,1.85)
    (3.20,2.00)
    (4.25,2.10)
    (4.70,2.70)
    (5.05,3.60)
    (5.20,3.88)
    (5.35,3.65)
    (5.65,2.45)
    (6.25,1.65)
    (7.15,1.45)
    (8.10,1.95)
    (8.60,2.95)
    (8.85,3.18)
    (9.10,2.88)
    (9.55,1.92)
    (10.40,1.72)
    (11.00,1.82)
};

\coordinate (ThetaZero) at (5.20,1.15);
\coordinate (MS) at (5.20,3.88);
\coordinate (MR) at (8.85,3.18);

\draw[guide] (ThetaZero) -- (MS);
\draw[guide] (8.85,1.15) -- (MR);
\draw[guide] (4.15,1.15) -- (4.15,4.0);
\draw[guide] (6.15,1.15) -- (6.15,4.0);

\node[below] at (4.15,1.15) {$\theta_0-A$};
\node[below] at (5.20,1.15) {$\theta_0$};
\node[below] at (6.15,1.15) {$\theta_0+A$};

\fill[red!80!black] (MS) circle (1.6pt);
\fill[purple!80!black] (MR) circle (1.6pt);

\node[mylabel, above left] at (MS) {$M_S$};
\node[mylabel, above right] at (MR) {$M_R$};

\node[mylabel] at (9.70,4.25) {$M=\max(M_S,M_R)$};

\node[align=center] at (2.15,0.35) {remote\\search region};
\node[align=center] at (5.15,0.35) {local\\signal region};
\node[align=center] at (10.00,0.35) {remote\\search region};

\end{scope}

\end{tikzpicture}
\caption{
Conceptual geometry of the greedy-bump problem and its connection to the look-elsewhere effect.
\textbf{(a)} The closure of the signal family forms a cone
$\mathcal C=\{\mu s(\theta): \mu\ge 0,\ \theta\in\mathcal M\}$.
At a nonzero signal point $Q\,s(\theta_0)$, where \(Q>0\) denotes the true signal amplitude, profiling over the nuisance coordinates explores the local tangent space and produces the strong-signal bias and profile gain.
At the apex $\mu=0$, corresponding to the null hypothesis, the template coordinates become unidentifiable.
\textbf{(b)} In a scan statistic or matched-filter field, the global maximum can be decomposed into the signal-associated local maximum $M_S$ and the best remote competitor $M_R$.
At large $Q$, $M_S$ dominates and the bias is local; at smaller $Q$, remote maxima can overtake it, reintroducing the global search-volume dependence.
}
\label{fig:geometry_bridge}
\end{figure}
A second issue appears once remote extrema matter.  Standard
high-excursion treatments exploit the asymptotically Poisson character
of sufficiently rare peaks.  At finite threshold, however, maxima of a
smooth field remain correlated.  Their mean number is therefore not
enough to determine the probability that no competing maximum occurs.
We organize the correction in terms of factorial cumulants: the first
describes the mean peak count, the second the connected pair
correlation, and higher cumulants encode genuinely higher-order
clustering.  
For the one-dimensional Gaussian matched-filter field we calculate the
second factorial cumulant using the two-point Kac--Rice formula for the
joint occurrence of pairs of local maxima~\cite{AzaisWschebor2009,AdlerTaylor2007},
and find a quartic short-distance exclusion law for nearby maxima.

The paper is organized as follows.  Section~2 reviews the literature
most directly related to nuisance parameters present only under the
alternative, random-field power calculations, optimization bias, and
Kac--Rice peak inference.  Section~3 isolates the common dimensional
structure connecting local profiling and the look-elsewhere effect.
Sections~4 and~5 introduce the Gaussian template geometry and derive
the strong-signal local results, while Sec.~6 establishes the
asymptotic differential bridge to the null trials factor.
Section~7 develops one- and two-dimensional Gaussian examples and
their model-dependent higher-order corrections, which are tested
numerically in Sec.~8.  Section~9 follows the transition from local
profiling to global competition between the signal-associated maximum
and remote extrema.  Section~10 develops finite-threshold corrections
to the null extrema process in terms of factorial cumulants, and
Sec.~11 derives the second cumulant from two-point Kac--Rice theory and
analyzes the short-distance exclusion of nearby maxima.  Sec.~12 summarizes the main conclusions and discusses limitations and
possible extensions.  Appendix~A derives the
higher-order local expansions used in Sec.~7, while Appendix~B gives
the principal-coordinate calculation underlying the short-distance
two-point result.

\section{Related work}
\label{sec:related}

The statistical ingredients entering the present problem have appeared
in several literatures, often with different terminology and with the
null and alternative hypotheses treated separately.  We summarize here
the strands that are most directly relevant to the connection pursued
in this paper.

\subsection{Nuisance parameters present only under the alternative \label{sec:related-nuisance}}

The nonregular testing problem produced by a signal location that is
undefined under the null was formulated explicitly by Davies
\cite{Davies1977,Davies1987}.  In that setting, a simple null hypothesis
is tested against a family of alternatives indexed by a nuisance
parameter that disappears when the signal amplitude vanishes.  Davies
reduced the problem to the extrema of Gaussian or chi-square processes
and derived approximations both for significance and, already in the
1977 treatment, for power.

In high-energy physics this problem became widely known as the
look-elsewhere effect. We quantify this multiplicity by the null trials factor
${\rm TF}_0(u)$ at local threshold $u$, defined conceptually as the
ratio between the background-only probability of exceeding $u$
somewhere in the search and the corresponding fixed-template tail
probability.  Its precise definition for the Gaussian field used below
is given in Sec.~\ref{sec:bridge}.
Gross and Vitells \cite{GrossVitells2010},
building directly on Davies, gave a practical high-threshold
approximation for resonance searches and showed that in one search
dimension the trials factor grows asymptotically linearly with the local
significance.  Their interpretation is especially relevant here: the
search range may be regarded as containing an effective number of
independent regions, while continuous optimization of the signal
location contributes an additional local degree of freedom inside each
region.  In their notation this appears through a high-tail combination
of an $s$-degree-of-freedom fixed-location term and an
$(s+1)$-degree-of-freedom search term.  Thus the distinction between
\emph{global multiplicity} and \emph{local continuous optimization} is
already present explicitly in the one-dimensional LEE treatment.
Vitells and Gross subsequently extended the analysis to multidimensional
search spaces using random-field and Euler-characteristic methods
\cite{VitellsGross2011}.  More generally, the geometry of excursion sets
and maxima of smooth random fields provides the mathematical framework
in which these high-threshold results are naturally expressed.

A complementary interpretation was developed by Bayer and Seljak
\cite{BayerSeljak2020}.  They related the look-elsewhere trials factor to
a prior-to-posterior volume ratio, providing a continuous version of the
usual multiplicity correction.  Bayer, Seljak and Robnik later proposed
a self-calibrating construction based on the heights of noise-induced
likelihood peaks \cite{BayerSeljakRobnik2021}.  The volume-ratio picture
is especially relevant here because, for a $D$-dimensional locally
regular signal manifold, the localization covariance shrinks as
$Q^{-2}$ and hence the local posterior volume scales as $Q^{-D}$.
This gives another route to the familiar high-significance scaling
${\rm TF}_0(u)\propto u^D$.

\subsection{Gaussian random fields under a genuine signal}

A particularly close antecedent to the present problem is the work of
Siegmund and Worsley \cite{SiegmundWorsley1995}.  They considered a
known signal shape embedded in stationary Gaussian noise, with unknown
amplitude, location, and scale, and maximized the corresponding Gaussian
random field over the unknown template coordinates.  Their paper is notable because it treats both sides of the problem:
under the null, the probability of a high maximum is approximated using
tube methods---which measure a narrow geometric neighborhood around the
template manifold---and related Euler-characteristic methods, while
the power is studied under a genuine signal.

In the strong-signal regime, Siegmund and Worsley's power calculation expands the field quadratically around the true signal point and maximizes over the local
location--scale coordinates.  The resulting optimization increment is
controlled by a quadratic form in the field gradient which, after
whitening, has a chi-square distribution with a number of degrees of
freedom equal to the number of profiled template coordinates.  In the
notation used below, this is the same local mechanism behind
\begin{equation}
  \widehat Q-Z_{\rm fix}
  \simeq
  \frac{\chi_D^2}{2Q},
  \qquad
  \Delta q\simeq\chi_D^2,
  \label{eq:rw-siegmund-local}
\end{equation}
at leading order.  Thus the local positive shift of the optimized peak
height is a known statistical phenomenon; an equivalent quadratic
maximization mechanism is already implicit in the alternative-side
analysis of Ref.~\cite{SiegmundWorsley1995}.

The importance of the above antecedent for this work is as follows. Both sides of the present treatment are already contained in
Ref.~\cite{SiegmundWorsley1995}: tube/Euler-characteristic calculations
control the high null maximum, while a local quadratic maximization
controls power under a genuine signal.  The two calculations are used for different inferential purposes---significance and power, respectively. To our knowledge, that work does not turn them into an explicit null--alternative identity involving the global trials factor, nor does it formulate the logarithmic-derivative relation derived below.  Our bridge should therefore be viewed as a synthesis of asymptotic structures that are already substantially present in this earlier work.

\subsection{Optimization bias and selected peak heights}

Positive bias caused by optimizing a noisy matched-filter statistic has
also been recognized directly in astronomical applications.  In the
South Pole Telescope cluster analysis of Vanderlinde et al.
\cite{Vanderlinde2010}, the detection significance was optimized over
two position coordinates and one angular-scale parameter.  A correction
equivalent at high signal-to-noise to a $3/(2Q)$ upward shift was
introduced heuristically.  Zubeldia et al. later studied this effect
systematically under the name \emph{optimization bias}
\cite{Zubeldia2021}.  
They showed that the optimized matched-filter signal-to-noise is
positively biased even for a perfectly specified signal in Gaussian
noise, and obtained, in their notation, an approximate relation of the
form
\begin{equation}
  \bar q_{\rm opt}
  \simeq
  \sqrt{\bar q_t^{\,2}+f_{\rm eff}},
  \label{eq:zubeldia}
\end{equation}
where their $\bar q$ denotes an SNR-like quantity and should not be
confused with the twice-log-likelihood-ratio statistic $q$ used in the
present paper.  Identifying the unoptimized signal-to-noise
$\bar q_t$ with our $Q$ and taking $f_{\rm eff}=D$, the large-$Q$
expansion gives
\[
  \bar q_{\rm opt}
  \simeq
  \sqrt{Q^2+D}
  =
  Q+\frac{D}{2Q}+O(Q^{-3}),
\]
so that the optimization bias
$\bar q_{\rm opt}-\bar q_t$ is $D/(2Q)$ at leading order.

Other applications display a more explicitly global form of
maximization bias.  In a search for millimeter transients, Whitehorn et
al. \cite{Whitehorn2016} noted that, without a compensating penalty, the
fit is biased toward short flare durations because shorter templates
permit more effectively independent start times and hence a larger
trials factor.  This is conceptually distinct from the strong-signal
local bias studied here: it is a bias in a fitted shape parameter driven
by variation of the \emph{global} search multiplicity.  The distinction
will be useful below when separating local profiling from remote-peak
competition.

A related literature studies inference after selecting local maxima.
Davenport and Nichols~\cite{DavenportNichols2020}, for example, showed
that effect sizes reported at neuroimaging peaks are biased upward
because the reported locations have both passed a threshold and been
selected as local maxima. This selective-inference perspective establishes that upward peak-height bias is a general phenomenon; our interest is in the analytic profiling
geometry of a signal template and its connection to the look-elsewhere
asymptotics.

\subsection{Non-centered random fields, power, and peak inference}

Random-field methods have also been developed explicitly under the
alternative.  Hayasaka et al. \cite{Hayasaka2007} introduced a
non-central random-field framework for power and sample-size calculations
in neuroimaging, treating anticipated signal regions while retaining the
spatial multiplicity correction.

More recently, Zhao, Cheng and Schwartzman \cite{ZhaoChengSchwartzman2024}
studied the power of detecting peaks in a non-centered Gaussian random
field.  They approximate the probability of at least one local maximum
above a threshold by the expected number of such maxima, calculated by
Kac--Rice methods, and establish the approximation in several regimes
including high thresholds and sharp signals.  Cheng
\cite{Cheng2025} derived exact formulas for the expected number and
height distribution of local maxima in classes of smooth non-centered
Gaussian fields, including stationary planar fields with deterministic
trends.  These developments provide tools directly relevant to the
weakening-signal regime in which a signal-associated maximum begins to
compete with unrelated extrema.

Green and Taylor \cite{GreenTaylor2025} provide the closest modern
contact with the matrix formulation used below.  They study inference for the location and height of selected peaks of a smooth
signal-plus-Gaussian-noise field and derive Kac--Rice approximations that are second-order accurate near high-curvature true peaks.  Their
alternative-side peak-height correction contains the local combination
\begin{equation}
  \frac12\operatorname{tr}(H^{-1}\Lambda),
  \label{eq:green-taylor-trace}
\end{equation}
where $H$ is the deterministic peak-curvature matrix and $\Lambda$ is the noise-gradient covariance.  They also discuss the corresponding null peak geometry.  Conditional on a stationary noise peak having a large height $v$, its mean curvature grows proportionally to $v\Lambda$, where $\Lambda$ is the covariance matrix of the field gradient.  The Kac--Rice density of local maxima contains the determinant of this curvature matrix; since
\[
  \det(v\Lambda)=v^D\det\Lambda,
\]
a $D$-dimensional search acquires the same $v^D$ dependence that
ultimately produces the high-threshold random-field trials factor.
Thus Green and Taylor place the local null and alternative peak geometries very nearly side by side.  In the normalized matched-filter problem considered here, $H_Q=QG$ and $\Lambda=G$, so
Eq.~\eqref{eq:green-taylor-trace} reduces to $D/(2Q)$.
What is not formulated there, to our knowledge, is a relation to the
\emph{global} look-elsewhere trials factor in the Gross--Vitells sense.
The step taken below is to combine the local Kac--Rice curvature identity
with the high-threshold global excursion asymptotics and to express the
result as a logarithmic derivative of the trials factor.

\subsection{Higher-order Kac--Rice structure}

The one-point Kac--Rice formula gives the mean number of extrema in a
search region.  Its higher-order versions give joint intensities and
factorial moments of critical-point counts; standard treatments may be
found, for example, in Refs.~\cite{AzaisWschebor2009,AdlerTaylor2007}.

The distinction between the mean density of extrema and their
higher-order correlation structure becomes important away from the
asymptotic high-threshold regime.  If $N_u$ is the number of local
maxima above a threshold, a Poisson approximation is determined
entirely by $\E[N_u]$, whereas the exact probability of finding no such
maximum also depends on correlations among the extrema, encoded by
higher factorial cumulants.
The present work uses the two-point Kac--Rice density to calculate the
second factorial cumulant explicitly for a one-dimensional Gaussian
matched-filter field.  This provides a controlled finite-threshold
correction to the leading Poisson picture rather than another effective
trials-factor parametrization.

\subsection{Position of the present work \label{s:position}}

Several individual ingredients of the argument below are therefore known.  In particular, we do not claim novelty for positive optimization
bias itself, the local $\chi^2_D$ profile gain, the coefficient
$D/(2Q)$ in isolation, or the high-threshold scaling
${\rm TF}_0(u)\propto u^D$.
Siegmund--Worsley,
Gross--Vitells, and Green--Taylor in particular contain remarkably close pieces of the structure developed below.

Our first contribution is to make the \emph{global null / local
alternative} relation explicit.  In the normalized matched-filter
problem, the same template metric controls the deterministic curvature
of a strong signal peak, the covariance of the random field gradient,
and the conditional curvature of a high null excursion.  Consequently
the same $D$ local profiling directions control
\begin{equation}
  {\rm TF}_0(u)\propto u^D,
  \qquad
  \E[\Delta q]\to D,
  \qquad
  2Q\,\E[\widehat Q-Q]\to D.
  \label{eq:related-three-D}
\end{equation}
Under the assumptions made explicit in Sec.~\ref{sec:bridge}, this
correspondence gives
\begin{equation}
  \E_Q[\widehat Q-Q]
  =
  \left.
  \frac12\frac{d}{du}\log {\rm TF}_0(u)
  \right|_{u=Q}
  +o(Q^{-1}).
  \label{eq:related-logderivative}
\end{equation}
The logarithmic derivative removes the threshold-independent global
search-volume factor and retains the local continuous-optimization
contribution.  We have not found this global-trials-factor differential
identity stated explicitly in the literature reviewed above.

Our second contribution is to continue the comparison beyond the strict
high-excursion Poisson limit.  We separate the signal-associated maximum
from competing extrema, quantify the small dependence corrections, and
show that finite-threshold deviations of the null void probability are
organized naturally by factorial cumulants.  For the clean
one-dimensional interior-maxima process, the leading non-Poisson
correction is then obtained from a two-point Kac--Rice calculation and
validated against simulation.  This part of the analysis distinguishes
three levels that should not be mixed up: local dimension, local
curvature, and the correlated global process of extrema.

\section{Local optimization and global search}
\label{sec:concept}

Before introducing the detailed model, it is useful to isolate the
structural observation that motivates the paper.

\subsection{Two components of the look-elsewhere effect}

In the one-dimensional resonance-search problem, Gross and Vitells show
that the high-threshold trials factor grows asymptotically linearly with
the local significance \cite{GrossVitells2010}.  
In a $D$-dimensional smooth search, the leading interior random-field
term has the corresponding power-law form
\begin{equation}
  {\rm TF}_0(u)
  \sim
  K_{\mathcal M}\,u^D,
  \qquad
  u\gg1,
  \label{eq:TF-scaling}
\end{equation}
where $K_{\mathcal M}$ contains the global information about the size and geometry
of the search region.

Following the interpretation emphasized by Gross and Vitells, this
scaling separates naturally into two ingredients:
the constant $K_{\mathcal M}$, which encodes the global multiplicity or search
volume, and the factor $u^D$, which reflects the continuous optimization
over $D$ local signal coordinates.  The former answers the familiar
question of how many effectively distinct regions were searched; the
latter quantifies the additional gain obtained by adjusting the
position or shape of a candidate excess within each such region.

\subsection{The same dimension on the alternative side}

The calculations developed below show that, under a strong signal, the
same number $D$ controls the local likelihood gain:
\begin{equation}
  \Delta q
  \equiv
  q_{\rm prof}-q_{\rm fixed}
  \xrightarrow{d}
  \chi^2_D,
  \qquad
  \E[\Delta q]\to D.
  \label{eq:concept-deltaq}
\end{equation}
The corresponding fitted-amplitude bias is
\begin{equation}
  \E[\widehat Q-Q]
  =
  \frac{D}{2Q}
  +o(Q^{-1}).
  \label{eq:concept-bias}
\end{equation}
Thus the same integer $D$ appears in three different objects:
\begin{equation}
  \begin{array}{ccl}
{\rm TF}_0(u) &\propto& u^D,
  \\[1mm]
  \E[\Delta q] &\longrightarrow& D,
  \\[1mm]
  2Q\,\E[\widehat Q-Q] &\longrightarrow& D.
  \end{array}
  \label{eq:three-Ds}
\end{equation}

It is useful here to distinguish two contributions to the fitted
amplitude.  At the true fixed template,
\[
  Z_{\rm fix} \equiv Z(\theta_0)=Q+X,
  \qquad
  \E[X]=0 ,
\]
so that
\[
  \widehat Q-Q
  =
  \underbrace{(Z_{\rm fix}-Q)}_{\text{radial noise}}
  +
  \underbrace{(\widehat Q-Z_{\rm fix})}_{\text{profiling gain}}.
\]
The first term is the ordinary fixed-template fluctuation and has zero
mean.  The positive greedy bias is generated by the second term, which
is the increase obtained by moving along the template manifold.
Consequently
\[
  \E[\widehat Q-Q]
  =
  \E[\widehat Q-Z_{\rm fix}].
\]

The last two lines of Eq.~\eqref{eq:three-Ds} are connected directly by
\begin{equation}
  \Delta q
  =
  2Q\,(\widehat Q-Z_{\rm fix})+o_{\mathrm p}(1),
  \label{eq:deltaq-bias-relation}
\end{equation}
where $o_{\mathrm p}(1)$ denotes a random remainder that converges to zero
in probability as $Q\to\infty$.  Thus the local likelihood gain is
converted into a positive amplitude shift by the slope
$dq/dQ\simeq2Q$ of the strong-signal likelihood ratio.

\subsection{A logarithmic-derivative correspondence}

Combining Eq.~\eqref{eq:TF-scaling} with
Eq.~\eqref{eq:concept-bias} gives
\begin{equation}
  \left.
  \frac12\frac{d}{du}\log {\rm TF}_0(u)
  \right|_{u=Q}
  =
  \frac{D}{2Q}
  \sim
  \E[\widehat Q-Q].
  \label{eq:logTF-bias}
\end{equation}
At this stage Eq.~\eqref{eq:logTF-bias} could appear to be only a
dimensional coincidence.  
The reason it is potentially more than that is that the exponent $D$
in the null trials factor and the coefficient $D$ in the strong-signal
bias originate from the same local tangent-space geometry.
Under a genuine signal, the deterministic matched-filter
peak becomes narrower in the $D$ tangent directions in proportion to
$Q$.  Under the null, conditioning on a rare peak of height $u$
produces the same height-dependent curvature in those directions.  The
alternative calculation therefore involves the inverse curvature
through a trace, whereas the null Kac--Rice calculation involves the
curvature through a determinant.  Section~\ref{sec:bridge} makes this
common-matrix statement precise and derives the logarithmic-derivative
relation under standard high-excursion regularity conditions.

The logarithmic derivative removes the constant $\log K_{\mathcal M}$, and
therefore eliminates the global search-volume factor while retaining
the local optimization term $D\log u$; evaluating the result at
$u=Q$ then connects it to the strong-signal bias.
This mirrors the strong-signal problem, in which the real
signal anchors the fit near a particular point of the template manifold:
the leading greedy bias depends on the local dimension but not on the
total volume searched.

The geometric derivation in the next sections explains why this same
$D$ arises on the alternative side.  Section~\ref{sec:bridge} goes one
step further and shows, within the matched-filter Gaussian model, that
the logarithmic-derivative relation follows from the same local
covariance/curvature matrix on the two sides of the hypothesis test.

\section{Gaussian template model and signal-cone geometry}
\label{sec:model}

\subsection{Whitened observation model}

For the derivation below we assume Gaussian noise explicitly, in order
to isolate the profiling geometry as cleanly as possible.  The
construction should be viewed as the whitened, quadratic limit of a
more general likelihood problem: after expansion around the true model,
the covariance is transformed to the identity and the signal templates
become vectors in the resulting data space.
In a counting experiment the underlying likelihood may,
for example, be Poisson or extended, but sufficiently regular
large-sample likelihoods have the same local quadratic structure.
The leading tangent-space results derived below are therefore expected
to be more general than the exact finite-threshold random-field formulas,
which depend on the Gaussian model assumed here.

Let the whitened data vector be
\begin{equation}
  \bm x
  =
  Q\,\bm s(\thetao)+\bm\epsilon,
  \qquad
  \bm\epsilon\sim N(\bm 0,I),
  \label{eq:data-model}
\end{equation}
where $\bm\epsilon$ is the whitened noise vector, with zero mean and
unit covariance, and $\bm s(\bm\theta)$ is a smooth family of signal
templates parametrized by
$\bm\theta=(\theta_1,\ldots,\theta_D)$ in the $D$-dimensional template
manifold $\mathcal M$, and normalized according to
\begin{equation}
  \ip{\bm s(\bm\theta)}{\bm s(\bm\theta)}=1,
  \qquad
  \bm\theta\in\mathcal M\subset\R^D.
  \label{eq:normalization}
\end{equation}
Here normalization refers to unit norm in the whitened data space
($\sum_a s_a^2=1$ for discretized data), rather than to unit integral
of the signal shape.  With this convention the matched-filter output
at any fixed template has unit noise variance.
The parameter $Q>0$ is the true signal amplitude measured in units of
the fixed-template standard deviation.

For fixed $\bm\theta$, the maximum-likelihood amplitude in this model is
simply the matched-filter field
\begin{equation}
  Z(\bm\theta)
  =
  \ip{\bm x}{\bm s(\bm\theta)}.
  \label{eq:Zfield}
\end{equation}
To connect this notation with the usual likelihood-ratio language,
consider the Gaussian log likelihood
\begin{equation}
  \ell(\mu,\bm\theta)
  =
  -\frac12
  \left\|
    \bm x-\mu\,\bm s(\bm\theta)
  \right\|^2
  +{\rm const}.
  \label{eq:gaussian-loglik}
\end{equation}
For fixed $\bm\theta$, maximizing over an unrestricted amplitude gives
\begin{equation}
  \widehat\mu(\bm\theta)=Z(\bm\theta).
\end{equation}
The twice-log-likelihood ratio relative to $\mu=0$ is therefore
\begin{equation}
  q(\bm\theta)
  =
  2\bigl[
    \ell(\widehat\mu,\bm\theta)-\ell(0)
  \bigr]
  =
  Z(\bm\theta)^2.
  \label{eq:q-Z}
\end{equation}
If the physical signal strength is constrained to $\mu\ge0$, the
corresponding statistic is $q(\bm\theta)=Z_+(\bm\theta)^2$, with
$Z_+=\max(0,Z)$.  In the strong-signal regime considered in the next
sections the probability of encountering this boundary is exponentially
small, so the two forms are asymptotically equivalent.
We denote by $\widehat Q$ the value of this field at the local maximum
associated with the true signal,
\begin{equation}
  \thetahat
  =
  \arg\max_{\bm\theta\approx\thetao}Z(\bm\theta),
  \qquad
  \widehat Q=Z(\thetahat).
  \label{eq:localmax}
\end{equation}
The restriction to the signal-associated local maximum is essential in
the present section: global competition with remote maxima belongs to a
different, weak-signal regime.

\subsection{The signal cone}

The strict positive-signal alternative is
\begin{equation}
  \mathcal C^\circ
  =
  \{\mu\,\bm s(\bm\theta):
    \mu>0,\ \bm\theta\in\mathcal M\}.
\end{equation}
Its closure,
\begin{equation}
  \mathcal C
  =
  \{\mu\,\bm s(\bm\theta):
    \mu\ge0,\ \bm\theta\in\mathcal M\},
  \label{eq:cone}
\end{equation}
is a cone over the template manifold.  The point $\mu=0$ is the cone
apex and represents the null hypothesis.

This picture makes the regularity issue transparent.  For $\mu>0$,
changes in $\bm\theta$ move the mean prediction:
\begin{equation}
  \frac{\partial}{\partial\theta_i}
  \bigl[\mu\,\bm s(\bm\theta)\bigr]
  =
  \mu\,\partial_i\bm s(\bm\theta).
\end{equation}
At $\mu=0$, all these derivatives vanish and the template coordinates
are unidentifiable.  
Thus the strong-signal problem is locally regular away from the apex,
while the null problem is singular.  In the former regime the usual Wilks picture for nested regular models applies
locally~\cite{Wilks1938,CowanEtAl2011}: freeing $D$
additional identifiable template coordinates produces an asymptotic
$\chi^2_D$ likelihood gain.  At the apex, however,
those coordinates are not identifiable under the null, violating the
regularity assumptions of Wilks' theorem and leading instead to the
Davies/random-field problem underlying the look-elsewhere
effect~\cite{Davies1977,Davies1987}.

\subsection{Local metric}

At the true template define
\begin{equation}
  \bm s_0=\bm s(\thetao),
  \qquad
  \bm s_i=\partial_i\bm s(\thetao),
\end{equation}
and the $D\times D$ metric
\begin{equation}
  G_{ij}
  =
  \ip{\bm s_i}{\bm s_j}.
  \label{eq:metric}
\end{equation}
Normalization implies
\begin{equation}
  \ip{\bm s_0}{\bm s_i}=0,
  \label{eq:orthogonal}
\end{equation}
so the radial amplitude direction is orthogonal to all tangent
directions.  Differentiating the normalization once more gives
\begin{equation}
  \ip{\bm s_0}{\partial_i\partial_j\bm s}
  =
  -G_{ij}.
\end{equation}
Consequently, for
$\dtheta=\bm\theta-\thetao$,
\begin{equation}
  \rho(\dtheta)
  \equiv
  \ip{\bm s_0}{\bm s(\thetao+\dtheta)}
  =
  1-\frac12\dtheta^T G\dtheta
  +O(\|\dtheta\|^3).
  \label{eq:overlap-expansion}
\end{equation}

Thus $G$ measures how rapidly pairs of neighboring normalized signal templates become distinguishable.  In local coordinates,
$\delta\bm\theta^T G\,\delta\bm\theta$ is the squared mismatch between
the true template and a nearby one to leading order.  Equivalently, the expression plays here the role of the Fisher-information metric for the profiled
template coordinates, after the radial amplitude direction has been
separated and the data have been whitened.
\section{Local profiling of a strong signal}
\label{sec:local}

We now state the leading local result.  The assumptions are:
(i) Gaussian whitened noise as in Eq.~\eqref{eq:data-model};
(ii) a normalized template family that is three times continuously
differentiable ($C^3$) in the template coordinates;
(iii) a true parameter point \(\theta_0\) lying in the interior of the search region; and
(iv) a positive-definite local metric $G$.
The calculation separates two kinds of noise fluctuation.  The scalar component along the true template changes the fitted amplitude even when the template is held fixed.  The $D$ components along the tangent directions instead move the fitted template away from the true point. The latter displacements are the source of the additional positive profiling gain.  The proposition below makes this decomposition
explicit.

\begin{proposition}[Local profiling in $D$ dimensions]
\label{prop:local}
Let $Q\to\infty$ with the signal-associated local maximum selected as in
Eq.~\eqref{eq:localmax}.  Define
\begin{equation}
  X=\ip{\bm\epsilon}{\bm s_0},
  \qquad
  g_i=\ip{\bm\epsilon}{\bm s_i}.
\end{equation}
Then
\begin{equation}
  X\sim N(0,1),
  \qquad
  \bm g\sim N(\bm0,G),
  \qquad
  X\perp\bm g,
  \label{eq:Xg}
\end{equation}
and
\begin{align}
  \thetahat-\thetao
  &=
  \frac{1}{Q}G^{-1}\bm g
  +\op(Q^{-2}),
  \label{eq:theta-hat-leading}
  \\
  \widehat Q
  &=
  Q+X
  +\frac{1}{2Q}\bm g^T G^{-1}\bm g
  +\op(Q^{-2}).
  \label{eq:Qhat-leading}
\end{align}
Since
\begin{equation}
  \bm g^T G^{-1}\bm g\sim\chi^2_D,
\end{equation}
the leading amplitude bias is
\begin{equation}
  \E[\widehat Q-Q]
  =
  \frac{D}{2Q}
  +o(Q^{-1}).
  \label{eq:bias-leading}
\end{equation}
\end{proposition}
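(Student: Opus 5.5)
The argument is a Laplace-type expansion of the matched-filter field around the true template, followed by a Newton step for the location of its maximum. The distributional claims in Eq.~\eqref{eq:Xg} come first and are immediate. The vector $(X,\bm g)$ is a linear image of $\bm\epsilon\sim N(\bm 0,I)$, so it is jointly Gaussian with mean zero. Its covariances are $\ip{\bm s_0}{\bm s_0}=1$, $\ip{\bm s_i}{\bm s_j}=G_{ij}$, and $\ip{\bm s_0}{\bm s_i}=0$ by Eq.~\eqref{eq:orthogonal}. Joint Gaussianity plus zero covariance gives $X\perp\bm g$. Since $G$ is positive definite, $G^{-1/2}\bm g\sim N(\bm 0,I_D)$, so $\bm g^TG^{-1}\bm g\sim\chi^2_D$.

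Next I would write, for $\dtheta=\bm\theta-\thetao$,
\[
Z(\thetao+\dtheta)=Q\rho(\dtheta)+\ip{\bm\epsilon}{\bm s(\thetao+\dtheta)} .
\]
Using Eq.~\eqref{eq:overlap-expansion} and a Taylor expansion of the noise term with $h_{ij}=\ip{\bm\epsilon}{\partial_i\partial_j\bm s(\thetao)}$, this gives
\[
Z=Q+X+\bm g^T\dtheta-\tfrac{Q}{2}\dtheta^TG\dtheta+\tfrac12\dtheta^Th\,\dtheta+Q\,O(\|\dtheta\|^3)+R_\epsilon(\dtheta).
\]
The remainder $R_\epsilon$ is $O(\|\dtheta\|^3)$ times the supremum of third derivatives of the Gaussian field $\ip{\bm\epsilon}{\bm s(\cdot)}$ on a fixed compact neighbourhood. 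The $C^3$ assumption gives the $O(\|\dtheta\|^3)$ form. To make that supremum an $\op(1)$ random constant I would also use that first and second derivatives of the noise field are bounded in probability there, via a Borell--TIS or Dudley bound; if only $C^3$ regularity of $\bm s$ is available, I would instead bound $R_\epsilon$ directly through the integral form of Taylor's theorem.

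The first-order condition is then
\[
\bm 0=\bm g-QG\dtheta+h\,\dtheta+O_{\mathrm p}\bigl(Q\|\dtheta\|^2+\|\dtheta\|^2\bigr).
\]
Because the deterministic Hessian $-QG$ dominates the $\op(1)$ random Hessian, a contraction or implicit-function argument on a ball of radius $CQ^{-1}$ yields a unique local maximizer with probability tending to one. Solving the condition gives $\dtheta=Q^{-1}G^{-1}\bm g+\op(Q^{-2})$, which is Eq.~\eqref{eq:theta-hat-leading}. The $Q^{-2}$ error comes from $h\dtheta/Q$ and $Q\|\dtheta\|^2/Q$.

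Substituting back, and using that the value at a critical point is insensitive to first-order errors in its location, I obtain
\[
\widehat Q=Q+X+\tfrac1{2Q}\bm g^TG^{-1}\bm g+\op(Q^{-2}),
\]
which is Eq.~\eqref{eq:Qhat-leading}. The neglected terms are $\dtheta^Th\dtheta=\op(Q^{-2})$ and $Q\|\dtheta\|^3=\op(Q^{-2})$. The selection rule in Eq.~\eqref{eq:localmax} is made concrete by defining $\thetahat$ as the maximizer over a shrinking neighbourhood, say $\|\dtheta\|\le Q^{-1/2}$. On that neighbourhood the strong concavity of $Q\rho$ beats the noise, so the maximizer is interior with probability $1-o(1)$.

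The main obstacle is the step to the bias in Eq.~\eqref{eq:bias-leading}, because stochastic $\op(Q^{-2})$ remainders do not by themselves control expectations. I would handle it by splitting on the good event $\mathcal E_Q$ on which the noise-derivative suprema are at most $Q^{1/4}$. On $\mathcal E_Q$ the expansion holds with a deterministic remainder bound $o(Q^{-1})$ multiplied by polynomial moments of the Gaussian suprema, and those moments are finite. The complement $\mathcal E_Q^c$ has probability $\exp(-cQ^{1/2})$ by Borell--TIS. On the complement I use the crude bound $|\widehat Q-Q|\le\sup_\theta|\ip{\bm\epsilon}{\bm s(\theta)}|$, which has all moments, and Cauchy--Schwarz makes this contribution negligible. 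Taking expectations, with $\E X=0$ and $\E\chi^2_D=D$, then gives $\E[\widehat Q-Q]=D/(2Q)+o(Q^{-1})$.
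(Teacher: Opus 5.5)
Your proposal is correct and follows the same route as the paper's proof. Both Taylor-expand $Z(\thetao+\dtheta)=Q\rho(\dtheta)+W(\dtheta)$, balance $\bm g$ against $QG\dtheta$ in the first-order condition, substitute back, and read off the covariances of $(X,\bm g)$ from the whitened noise. Your shrinking-neighbourhood selection, and the good-event argument with the crude bound $|\widehat Q-Q|\le\sup_{\bm\theta}|\ip{\bm\epsilon}{\bm s(\bm\theta)}|$ (valid because $\rho\le1$ and $\widehat Q\ge Z(\thetao)$), additionally justify two steps the paper's proof takes for granted: the localization of the maximizer, and the passage from the $O_{\mathrm p}$ expansion to the expectation in Eq.~\eqref{eq:bias-leading}.
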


\begin{proof}
Write
\begin{equation}
  Z(\thetao+\dtheta)
  =
  Q\,\rho(\dtheta)+W(\dtheta),
\end{equation}
where
\begin{equation}
  W(\dtheta)
  =
  \ip{\bm\epsilon}{\bm s(\thetao+\dtheta)}.
\end{equation}
Using Eq.~\eqref{eq:overlap-expansion} and expanding the noise field,
\begin{equation}
  W(\dtheta)
  =
  X+\bm g^T\dtheta+O_{\mathrm p}(\|\dtheta\|^2),
\end{equation}
we obtain
\begin{equation}
  Z(\thetao+\dtheta)
  =
  Q+X+\bm g^T\dtheta
  -\frac{Q}{2}\dtheta^T G\dtheta
  +O_{\mathrm p}(\|\dtheta\|^2)
  +Q\,O(\|\dtheta\|^3).
  \label{eq:Zlocal}
\end{equation}
The derivative of the linear noise term is $O_{\rm p}(1)$, whereas the restoring derivative of the deterministic quadratic term is $O_{\rm p}(Q\|\dtheta\|)$.  Balancing the two therefore gives
$\dtheta=O_{\rm p}(Q^{-1})$. Differentiating Eq.~\eqref{eq:Zlocal} then gives
\begin{equation}
  \bm0
  =
  \bm g-QG\dtheta+\op(Q^{-1}),
\end{equation}
which yields Eq.~\eqref{eq:theta-hat-leading}.  Substituting this solution
back into Eq.~\eqref{eq:Zlocal} gives
\begin{equation}
  \widehat Q
  =
  Q+X
  +\frac{1}{2Q}\bm g^TG^{-1}\bm g
  +\op(Q^{-2}).
\end{equation}

The covariance relations follow directly from the whitened Gaussian
noise:
\begin{align}
  \Var(X)
  &=
  \ip{\bm s_0}{\bm s_0}=1,
  \\
  \Cov(X,g_i)
  &=
  \ip{\bm s_0}{\bm s_i}=0,
  \\
  \Cov(g_i,g_j)
  &=
  \ip{\bm s_i}{\bm s_j}=G_{ij}.
\end{align}
Joint Gaussianity therefore implies independence of $X$ and $\bm g$.
Whitening the tangent gradient,
$\bm y=G^{-1/2}\bm g\sim N(\bm0,I_D)$, gives
\begin{equation}
  \bm g^TG^{-1}\bm g
  =
  \bm y^T\bm y
  \sim\chi^2_D,
\end{equation}
and Eq.~\eqref{eq:bias-leading} follows.
\end{proof}

\subsection{Profile-likelihood gain}

The likelihood-ratio interpretation makes the role of the tangent
directions particularly transparent.  A fixed-template fit uses only
the radial Gaussian fluctuation $X$.  Allowing the template coordinates
to move adds the $D$ tangent fluctuations through the positive quadratic
form $\bm g^TG^{-1}\bm g$.

At the true fixed template,
\begin{equation}
  q_{\rm fixed}
  =
  Z(\thetao)^2
  =
  (Q+X)^2,
  \label{eq:qfixed}
\end{equation}
up to an exponentially small correction from a one-sided
$\mu\ge0$ boundary at large $Q$.

Using Eq.~\eqref{eq:Qhat-leading},
\begin{equation}
  q_{\rm prof}
  =
  \widehat Q^2
  =
  (Q+X)^2
  +\bm g^TG^{-1}\bm g
  +\op(Q^{-1}).
\end{equation}
Therefore
\begin{equation}
  \Delta q
  \equiv
  q_{\rm prof}-q_{\rm fixed}
  =
  \bm g^TG^{-1}\bm g+\op(Q^{-1}).
  \label{eq:deltaq}
\end{equation}

\begin{corollary}[Extra local degrees of freedom]
\label{cor:deltaq}
In the strong-signal limit,
\begin{equation}
  \Delta q
  \xrightarrow{d}\chi^2_D,
  \qquad
  \E[\Delta q]\to D.
  \label{eq:deltaq-limit}
\end{equation}
Equivalently, the tangent-space approximation to the profiled statistic is
\begin{equation}
  q_{\rm prof}^{(\mathrm{tan})}
  =
  (Q+X)^2+\sum_{a=1}^D Y_a^2,
  \qquad
  X,Y_a\stackrel{\rm iid}{\sim}N(0,1),
\end{equation}
so that
\begin{equation}
  q_{\rm prof}^{(\mathrm{tan})}
  \sim
  \chi'^2_{D+1}(Q^2).
  \label{eq:noncentral}
\end{equation}
\end{corollary}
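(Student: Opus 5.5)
The corollary should follow directly from Proposition~\ref{prop:local} together with the expansion in Eq.~\eqref{eq:deltaq}.

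\textbf{Distributional limit.} By Eq.~\eqref{eq:deltaq},
\[
  \Delta q=\bm g^TG^{-1}\bm g+R_Q,
  \qquad
  R_Q=\op(Q^{-1}).
\]
By Eq.~\eqref{eq:Xg} the leading term has exactly the $\chi^2_D$ law for every $Q$, since $\bm g\sim N(\bm 0,G)$ does not depend on $Q$. The remainder satisfies $R_Q\to0$ in probability. Slutsky's lemma then gives $\Delta q\xrightarrow{d}\chi^2_D$ at once. The one-sided $\mu\ge0$ boundary changes $q_{\rm fixed}$ and $q_{\rm prof}$ only on events of probability $O(e^{-cQ^2})$, so it does not affect the limit.

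\textbf{Tangent-space form.} Set $\bm Y=G^{-1/2}\bm g$, whose components $Y_a$ are iid $N(0,1)$. They are independent of $X$ because $X\perp\bm g$ by Eq.~\eqref{eq:Xg}. Substituting $\bm g^TG^{-1}\bm g=\sum_a Y_a^2$ gives
\[
  q_{\rm prof}^{(\rm tan)}=(Q+X)^2+\sum_{a=1}^D Y_a^2 .
\]
This is the squared norm of a $(D+1)$-dimensional Gaussian vector with identity covariance and mean $(Q,0,\dots,0)$. It is therefore $\chi'^2_{D+1}(Q^2)$ by the definition of the noncentral chi-square, which is Eq.~\eqref{eq:noncentral}.

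\textbf{Convergence of the mean.} This is the only step that needs more than the proposition. Convergence in distribution does not by itself imply $\E[\Delta q]\to D$; we also need uniform integrability of $\Delta q$.

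I would obtain it by localizing. Fix a neighbourhood $U$ of $\thetao$ on which the signal-associated maximum is defined. On $U$, the Borell--TIS inequality applied to the smooth field $W$ and its first two derivatives controls the noise uniformly: $\sup_U\|\nabla^2W\|$ and $\sup_U|W|$ have Gaussian tails that do not depend on $Q$. On the event $E_Q$ where these suprema are at most $Q^{1/2}$, the deterministic curvature $-QG$ dominates, and a Taylor bound gives
\[
  0\le\widehat Q-Z_{\rm fix}\le C\,\|\bm g\|^2/Q .
\]
Hence $\Delta q\le C'(1+|X|/Q)\|\bm g\|^2$ on $E_Q$, which is dominated by an integrable variable. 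Off $E_Q$, the trivial bound $\Delta q\le 2\sup_U Z^2$ holds, and $\sup_U Z^2$ has all moments bounded by a polynomial in $Q$. Since $P(E_Q^c)$ is superpolynomially small, the contribution from $E_Q^c$ vanishes.

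Dominated convergence, applied to the rescaled remainder on $E_Q$, then yields $\E[\Delta q]\to\E[\chi^2_D]=D$.
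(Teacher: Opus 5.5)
Your proof is correct. For the distributional limit and the noncentral form it is the paper's own argument: square Eq.~\eqref{eq:Qhat-leading}, subtract $q_{\rm fixed}=(Q+X)^2$ to get Eq.~\eqref{eq:deltaq}, read off the $\chi^2_D$ law of $\bm g^TG^{-1}\bm g$, then whiten using $X\perp\bm g$ to obtain $\chi'^2_{D+1}(Q^2)$.

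The genuine difference is the moment claim $\E[\Delta q]\to D$. The paper gets it simply by taking expectations in a stochastic expansion, which is a formal step. You correctly point out that this needs uniform integrability, and you supply it. You work on an event $E_Q$ of probability $1-O(e^{-cQ})$ where the curvature of $Q\rho$ dominates the noise Hessian. There strong concavity gives $0\le\widehat Q-Z_{\rm fix}\le C\|\bm g\|^2/Q$, so $\Delta q$ is dominated by a $Q$-independent integrable variable, and Gaussian tails dispose of the complement. The paper's route is shorter and matches its level of rigour elsewhere; yours actually proves the expectation statement. The same domination, $0\le Q(\widehat Q-Z_{\rm fix})\le C\|\bm g\|^2$ on $E_Q$, is also what turns the expansion of Proposition~\ref{prop:local} into $\E[\widehat Q-Q]=D/(2Q)+o(Q^{-1})$, and what justifies Eq.~\eqref{eq:bridge-deltaq}. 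The paper likewise takes both of those for granted.

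A few details should be tightened; none is a real gap.
\begin{itemize}
\item $U$ must be a small convex neighbourhood on which $-\nabla^2\rho\succeq cG$ uniformly, so that $Z_Q$ is $cQ$-strongly concave on $E_Q$. For example, the Gaussian overlap $e^{-t^2/2}$ is concave only for $|t|<1$.
\item Writing $\Delta q=\delta(2Z_{\rm fix}+\delta)$ with $0\le\delta\le C\|\bm g\|^2/Q$ actually gives $\Delta q\le 2C(1+|X|/Q)\|\bm g\|^2+C^2\|\bm g\|^4/Q^2$. This is still integrably dominated.
\item Adding $\sup_U\|\nabla W\|\le Q^{1/2}$ to $E_Q$ also guarantees that the uphill maximum from $\thetao$ stays inside $U$.
\item Off $E_Q$ that maximum may leave $U$. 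Use $|\Delta q|\le 2\sup_{\mathcal M}Z_Q^2\le 2(Q+\sup_{\mathcal M}|Z_0|)^2$ instead of a bound over $U$, together with Cauchy--Schwarz.
\end{itemize}
It is also just as well that your Slutsky step uses only $R_Q\to 0$ in probability. Squaring Eq.~\eqref{eq:Qhat-leading} leaves the cross term $X\,\bm g^TG^{-1}\bm g/Q$, so the remainder in Eq.~\eqref{eq:deltaq} is really $\op(Q^{-1})$, not $o_{\mathrm p}(Q^{-1})$ as stated.
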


The result may be read geometrically.  The radial fluctuation $X$ is
present already in a fixed-template fit.  Profiling over $D$ local
template coordinates makes $D$ additional orthogonal Gaussian
directions available.  Their contribution to the likelihood ratio enters as a sum of squares and is therefore positive.  This is the local origin of the greedy bump bias.

\subsection{Localization of the fitted template}

Equation~\eqref{eq:theta-hat-leading} shows that the displacement of
the fitted template from the true point is of order $Q^{-1}$, with
asymptotic covariance
\[
  \Cov(\thetahat-\thetao)
  \simeq
  \frac{1}{Q^2}G^{-1}.
\]
A natural dimensionless measure of the localization error is therefore
the squared distance in the local template metric,
\begin{equation}
  r_{\rm loc}^2
  \equiv
  Q^2(\thetahat-\thetao)^T
  G
  (\thetahat-\thetao).
  \label{eq:rloc2-def}
\end{equation}

To express this distance in Cartesian coordinates, let $G=LL^T$ be a
Cholesky decomposition and define
\begin{equation}
  \bm y_{\rm loc}
  =
  QL^T(\thetahat-\thetao).
  \label{eq:whitened-localization}
\end{equation}
Then $r_{\rm loc}^2=\bm y_{\rm loc}^T\bm y_{\rm loc}$, and
Eq.~\eqref{eq:theta-hat-leading} implies
\begin{equation}
  \bm y_{\rm loc}
  \xrightarrow{d}
  N(\bm0,I_D),
\end{equation}
so that
\begin{equation}
  r_{\rm loc}^2\xrightarrow{d}\chi^2_D.
  \label{eq:rloc2}
\end{equation}

\section{A local bridge to the high-threshold trials factor}
\label{sec:bridge}

The preceding calculation recovers the alternative-side local geometry.
Its quadratic optimization mechanism is already present in
Siegmund--Worsley \cite{SiegmundWorsley1995}, and the general trace form
of the local peak-height correction is explicit in the recent Kac--Rice
analysis of Green--Taylor \cite{GreenTaylor2025}.  The high-threshold
null excursion scaling is likewise standard random-field theory and is
the basis of the Gross--Vitells treatment \cite{GrossVitells2010}.
Here we combine these ingredients in the normalized matched-filter
geometry and show that they imply an explicit differential relation
between the strong-signal optimization bias and the global null trials
factor.

\subsection {Bridging the two regimes}

For clarity, distinguish the null field
\begin{equation}
  Z_0(\bm\theta)
  =
  \ip{\bm\epsilon}{\bm s(\bm\theta)}
\end{equation}
from the field under a signal of strength $Q$ injected at $\thetao$,
\begin{equation}
  Z_Q(\bm\theta)
  =
  Q\,\rho(\bm\theta,\thetao)+Z_0(\bm\theta),
  \qquad
  \rho(\bm\theta,\thetao)
  =
  \ip{\bm s(\bm\theta)}{\bm s(\thetao)}.
  \label{eq:bridge-fields}
\end{equation}
The covariance of the null-field gradient is
\begin{equation}
  \Cov\!\left(
    \partial_i Z_0,\partial_j Z_0
  \right)
  =
  G_{ij}.
  \label{eq:bridge-gradient-cov}
\end{equation}

The central geometric observation can already be stated at this point.
Under a genuine signal, the deterministic mean field has negative
Hessian
\begin{equation}
  H_Q
  \equiv
  -\nabla^2 \E_Q[Z_Q(\bm\theta)]\big|_{\bm\theta=\thetao}
  =
  QG.
  \label{eq:bridge-preview-HQ}
\end{equation}
Under the null, conditioning on a high stationary excursion of height
$u$ produces, to leading order,
\begin{equation}
  \E\!\left[
    -\nabla^2 Z_0
    \mid
    Z_0=u,\nabla Z_0=0
  \right]
  =
  uG.
  \label{eq:bridge-preview-null}
\end{equation}
Thus the same metric $G$ controls the local curvature on the two sides
of the hypothesis test.

The two calculations use this curvature in different ways.  Under the
alternative, maximizing a quadratic peak produces an inverse-curvature
factor and hence a trace,
\begin{equation}
  \frac12\operatorname{tr}\!\left[(QG)^{-1}G\right]
  =
  \frac{D}{2Q}.
  \label{eq:bridge-preview-trace}
\end{equation}
Under the null, the Kac--Rice density of a high maximum contains the
curvature determinant,
\begin{equation}
  \det(uG)=u^D\det G.
  \label{eq:bridge-preview-det}
\end{equation}
The logarithmic derivative of the resulting $u^D$ factor is precisely
what reproduces the $D/(2Q)$ shift.  The proposition below turns this
matrix correspondence into the stated asymptotic relation.

Let
\begin{equation}
  M_0=\sup_{\theta\in\mathcal M} Z_0(\theta),
\end{equation}
where $M_0$ is the largest value of the matched-filter field found
anywhere in the search under the background-only hypothesis.  Denoting
probabilities under this null hypothesis by $\Pr_0$, the null trials
factor introduced in Sec.~\ref{sec:related-nuisance} is, more precisely,
\begin{equation}
  {\rm TF}_0(u)
  \equiv
  \frac{\Pr_0(M_0>u)}
       {\bar\Phi(u)},
  \qquad
  \bar\Phi(u)=1-\Phi(u),
  \label{eq:TF-definition}
\end{equation}
where $\bar\Phi(u)$ is the corresponding one-sided tail probability
at a single fixed template.

In this normalization the Gaussian threshold $u$ is also the usual
high-significance local significance, since the one-sided
likelihood-ratio statistic satisfies $q=Z_+^2$, with
$Z_+=\max(0,Z)$.  A threshold $u$ in the field therefore corresponds
to $q=u^2$.

We assume in this section that $\mathcal M$ is a smooth
$D$-dimensional search manifold, that $G(\bm\theta)$ is positive
definite, and that the standard smoothness and non-degeneracy conditions
required for a Kac--Rice high-excursion expansion hold.  We also assume
that the leading $D$-dimensional interior contribution is present.
Boundary terms, if present, are of lower dimension and do not change the
leading power of $u$.

The two component asymptotics in the proposition below should therefore
be regarded as established ingredients from the literature; the point of
the proposition is their combination into the global-null/local-alternative differential identity.

\begin{proposition}[Matched-filter bridge]
\label{prop:bridge}
Under the assumptions above, let $\widehat Q$ denote the height of the
signal-associated local maximum of $Z_Q$ for $Q\to\infty$.  Then
\begin{equation}
  \E_Q[\widehat Q-Q]
  =
  \frac{D}{2Q}
  +o(Q^{-1}).
  \label{eq:bridge-bias}
\end{equation}
Under the null, the high-threshold trials factor satisfies
\begin{equation}
  {\rm TF}_0(u)
  =
  K_{\mathcal M}\,u^D
  \left[1+o(1)\right],
  \qquad
  u\to\infty,
  \label{eq:bridge-TF}
\end{equation}
where $K_{\mathcal M}$ is independent of $u$.  If the high-threshold
expansion is differentiable at this order, then
\begin{equation}
  \boxed{
  \E_Q[\widehat Q-Q]
  =
  \left.
  \frac12\frac{d}{du}\log{\rm TF}_0(u)
  \right|_{u=Q}
  +o(Q^{-1}).
  }
  \label{eq:bridge-main}
\end{equation}
Moreover,
\begin{equation}
  \E_Q[\Delta q]
  =
  \left.
  Q\,\frac{d}{du}\log{\rm TF}_0(u)
  \right|_{u=Q}
  +o(1)
  =
  D+o(1).
  \label{eq:bridge-deltaq}
\end{equation}
\end{proposition}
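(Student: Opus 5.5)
The proof has four parts: the alternative-side bias, the null-side asymptotics, their combination, and the statement for $\Delta q$.

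\emph{Alternative side.} Eq.~\eqref{eq:bridge-bias} is Proposition~\ref{prop:local}, Eq.~\eqref{eq:bias-leading}, applied to the field $Z_Q$ of Eq.~\eqref{eq:bridge-fields}. The only gap is that Proposition~\ref{prop:local} controls $\widehat Q$ in probability, while the statement concerns an expectation. I would close it with a uniform integrability argument, which I expect to be the main technical obstacle, since a remainder that is $\op(Q^{-2})$ need not have small mean. The plan is as follows.

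\begin{itemize}
\item Restrict the signal-associated maximum to a fixed compact neighbourhood $\|\dtheta\|\le r$ of $\thetao$. Borell--TIS applied to $\sup W$ and to the sup of its first three derivatives gives Gaussian tails.
\item On the event $\{\sup|\nabla^k W|\le Q^{1/4}\}$, whose complement has probability $e^{-cQ^{1/2}}$, the Taylor remainder in Eq.~\eqref{eq:Zlocal} is deterministically $O(Q^{-2+\eta})$ after rescaling $\dtheta=\bm v/Q$. On the same event the maximizer lies in a ball of radius $O(Q^{-3/4})$, so it is interior.
\item On the complement, $|\widehat Q-Q|$ is bounded by $\sup|W|$, which has all moments. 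By Cauchy--Schwarz its contribution is exponentially small.
\item Taking expectations in Eq.~\eqref{eq:Qhat-leading} then uses $\E X=0$ and $\E[\bm g^TG^{-1}\bm g]=D$, which gives $D/(2Q)+o(Q^{-1})$.
\end{itemize}

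\emph{Null side.} For Eq.~\eqref{eq:bridge-TF} I would use the standard expected-Euler-characteristic (equivalently Kac--Rice) expansion for the unit-variance field $Z_0$, with the errors that hold under the stated regularity assumptions:
\[
\Pr_0(M_0>u)=\frac{\mathrm{Vol}_G(\mathcal M)}{(2\pi)^{(D+1)/2}}\,u^{D-1}e^{-u^2/2}\bigl[1+O(u^{-2})\bigr],
\]
with lower-order terms from boundary strata. Here $\mathrm{Vol}_G(\mathcal M)=\int\sqrt{\det G}\,d\bm\theta$, and this is exactly where the determinant $\det(uG)$ of Eq.~\eqref{eq:bridge-preview-det} enters. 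I would indicate briefly that the conditional Hessian identity Eq.~\eqref{eq:bridge-preview-null} follows from $\Cov(Z_0,\partial_i\partial_j Z_0)=\langle\bm s,\partial_{ij}\bm s\rangle=-G_{ij}$, which is the same relation behind Eq.~\eqref{eq:overlap-expansion}. Dividing by $\bar\Phi(u)=(2\pi)^{-1/2}u^{-1}e^{-u^2/2}[1+O(u^{-2})]$ then yields $K_{\mathcal M}=\mathrm{Vol}_G(\mathcal M)/(2\pi)^{D/2}$ and the factor $u^D$.

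\emph{Combining.} The stated differentiability hypothesis means $\log{\rm TF}_0(u)=\log K_{\mathcal M}+D\log u+r(u)$ with $r'(u)=o(u^{-1})$. For the Gaussian expansion above one can in fact verify $r'(u)=O(u^{-3})$ directly, because both numerator and denominator are explicit asymptotic series in $u^{-2}$. Hence
\[
\tfrac12\,\tfrac{d}{du}\log{\rm TF}_0(u)\big|_{u=Q}=\frac{D}{2Q}+o(Q^{-1}).
\]
Subtracting this from Eq.~\eqref{eq:bridge-bias} gives Eq.~\eqref{eq:bridge-main}.

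\emph{The statement for $\Delta q$.} From Eq.~\eqref{eq:deltaq}, $\Delta q=\bm g^TG^{-1}\bm g+\op(Q^{-1})$. Upgrading this to $\E_Q[\Delta q]=D+o(1)$ uses the same truncation argument as above; note that $\widehat Q^2$ has all moments bounded by those of $(Q+\sup|W|)^2$. Finally, $Q\,\frac{d}{du}\log{\rm TF}_0|_{u=Q}=D+o(1)$, which gives Eq.~\eqref{eq:bridge-deltaq}.
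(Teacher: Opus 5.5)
Your proposal is correct and follows essentially the paper's route: the alternative-side gain $\tfrac12\bm g^TH_Q^{-1}\bm g$ with $H_Q=QG$; the null-side Kac--Rice/Euler-characteristic tail, in which $\det(uG)=u^D\det G$ produces $K_{\mathcal M}u^{D-1}\phi(u)$ (your $K_{\mathcal M}=\mathrm{Vol}_G(\mathcal M)/(2\pi)^{D/2}$ agrees with the paper's); Mills' ratio; and the logarithmic derivative, with your Borell--TIS truncation supplying the passage from $\op$ remainders to expectations that the paper takes for granted. The only inaccuracy is your side remark that $r'(u)=O(u^{-3})$: boundary strata (e.g.\ the endpoints in the paper's one-dimensional interval search) give relative $O(u^{-1})$ corrections and hence only $r'(u)=O(u^{-2})$, and differentiating a tail asymptotic needs extra control anyway---but since you rely on the stated differentiability hypothesis, which only requires $r'(u)=o(u^{-1})$, nothing in the proof breaks.
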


\begin{proof}
\medskip\noindent\textit{Alternative side.}\quad
We first consider the alternative.  The deterministic mean field is
\begin{equation}
  m_Q(\bm\theta)
  =
  \E_Q[Z_Q(\bm\theta)]
  =
  Q\,\rho(\bm\theta,\thetao).
\end{equation}
Using Eq.~\eqref{eq:overlap-expansion}, its negative Hessian at the true
point is
\begin{equation}
  H_Q
  \equiv
  -\nabla^2 m_Q(\thetao)
  =
  QG.
  \label{eq:HQ-QG}
\end{equation}
Let
\begin{equation}
  \bm g
  =
  \nabla Z_0(\thetao),
  \qquad
  \bm g\sim N(\bm0,G).
\end{equation}
The local quadratic approximation reads
\begin{equation}
  Z_Q(\thetao+\dtheta)
  =
  Z_Q(\thetao)
  +\bm g^T\dtheta
  -\frac12\dtheta^T H_Q\dtheta
  +o_{\mathrm p}(Q^{-1}),
\end{equation}
uniformly on the $O_{\mathrm p}(Q^{-1})$ neighborhood relevant for the
local maximizer.  Maximization gives
\begin{equation}
  \widehat{\dtheta}
  =
  H_Q^{-1}\bm g
  +o_{\mathrm p}(Q^{-1})
\end{equation}
and therefore
\begin{equation}
  \widehat Q-Z_Q(\thetao)
  =
  \frac12
  \bm g^T H_Q^{-1}\bm g
  +o_{\mathrm p}(Q^{-1}).
  \label{eq:bridge-local-gain}
\end{equation}
Taking expectations and using
$\E[\bm g\bm g^T]=G$,
\begin{equation}
  \E_Q[\widehat Q-Z_Q(\thetao)]
  =
  \frac12
  \operatorname{tr}(H_Q^{-1}G)
  +o(Q^{-1})
  =
  \frac{D}{2Q}
  +o(Q^{-1}).
  \label{eq:bridge-trace}
\end{equation}
Since $\E_Q[Z_Q(\thetao)]=Q$, this proves
Eq.~\eqref{eq:bridge-bias}.

\medskip\noindent\textit{Null side.}\quad
We next turn to the null field.  Differentiating the unit-normalization
condition twice gives
\begin{equation}
  \Cov\!\left(
    \partial_i\partial_j Z_0,Z_0
  \right)
  =
  \ip{\partial_i\partial_j\bm s}{\bm s}
  =
  -G_{ij}.
  \label{eq:hessian-height-cov}
\end{equation}
Also
$\Cov(\nabla Z_0,Z_0)=0$.  Hence, conditional on a high field value
$Z_0=z$ and a stationary point $\nabla Z_0=0$,
\begin{equation}
  \E\!\left[
    -\nabla^2 Z_0
    \mid
    Z_0=z,\nabla Z_0=0
  \right]
  =
  zG.
  \label{eq:null-conditional-hessian}
\end{equation}
The residual conditional Hessian fluctuations remain $O_{\mathrm p}(1)$
as $z\to\infty$.  Therefore the determinant entering the Kac--Rice
density of high local maxima has leading behavior
\begin{equation}
  \det(-\nabla^2 Z_0)
  =
  z^D\det G\,[1+o_{\mathrm p}(1)].
  \label{eq:kac-rice-det}
\end{equation}

The gradient density at zero is
\begin{equation}
  p_{\nabla Z_0}(0)
  =
  (2\pi)^{-D/2}
  (\det G)^{-1/2}.
\end{equation}
Consequently, the leading interior Kac--Rice intensity of maxima above
$u$ is
\begin{align}
  \E_0[N_u]
  &\sim
  (2\pi)^{-D/2}
  \int_{\mathcal M}
    \sqrt{\det G(\bm\theta)}\,d\bm\theta
  \int_u^\infty
    z^D\phi(z)\,dz
  \nonumber\\
  &\equiv
  K_{\mathcal M}
  \int_u^\infty
    z^D\phi(z)\,dz.
  \label{eq:ENu}
\end{align}
The constant $K_{\mathcal M}$ contains the threshold-independent global
geometry of the search, in particular the search volume measured in
the local metric $G$, together with the associated normalization
factors.

For a smooth Gaussian field at high threshold, the probability of at
least one excursion maximum is asymptotically given by the leading
Kac--Rice/Euler-characteristic term, so
\begin{equation}
  \Pr_0(M_0>u)
  \sim
  K_{\mathcal M}\,
  u^{D-1}\phi(u).
  \label{eq:null-tail}
\end{equation}
Using the standard large-$u$ Gaussian-tail approximation
(Mills' ratio),
\begin{equation}
  \bar\Phi(u)
  \sim
  \frac{\phi(u)}{u},
  \qquad u\to\infty,
\end{equation}
we obtain
\begin{equation}
  {\rm TF}_0(u)
  \sim
  K_{\mathcal M}u^D,
\end{equation}
which proves Eq.~\eqref{eq:bridge-TF}.

Taking the logarithmic derivative gives
\begin{equation}
  \frac12\frac{d}{du}\log{\rm TF}_0(u)
  =
  \frac{D}{2u}
  +o(u^{-1}).
  \label{eq:bridge-log-derivative}
\end{equation}
Evaluating at $u=Q$ and comparing with
Eq.~\eqref{eq:bridge-bias} proves Eq.~\eqref{eq:bridge-main}.

Finally, from Eq.~\eqref{eq:bridge-local-gain},
\begin{equation}
  \Delta q
  =
  \widehat Q^2-Z_Q(\thetao)^2
  =
  2Q\,[\widehat Q-Z_Q(\thetao)]
  +o_{\mathrm p}(1).
\end{equation}
Taking expectations and using Eq.~\eqref{eq:bridge-main} yields
Eq.~\eqref{eq:bridge-deltaq}.
\end{proof}

\subsection{Why the relation is structural}

The proof makes the matrix correspondence previewed above explicit.
In the normalized matched-filter geometry,
\begin{equation}
  H_Q=QG,
  \qquad
  \Lambda\equiv\Cov(\nabla Z_0)=G,
\end{equation}
and hence
\begin{equation}
  \E_Q[\widehat Q-Q]
  =
  \frac12\operatorname{tr}(H_Q^{-1}\Lambda)
  +o(Q^{-1}).
\end{equation}
Because
\begin{equation}
  \frac{dH_Q}{dQ}=G=\Lambda,
\end{equation}
this can also be written
\begin{equation}
  \E_Q[\widehat Q-Q]
  =
  \frac12
  \frac{d}{dQ}\log\det H_Q
  +o(Q^{-1}).
  \label{eq:bias-logdet}
\end{equation}
The bridge described above therefore does not arise merely because two unrelated
formulas happen to contain the integer $D$: the same covariance metric
that controls random gradients under the null also controls the
curvature induced by the matched signal under the alternative.

\begin{remark}[Why this is special to matched likelihood geometry]
For a generic non-centered Gaussian field, let $H_Q$ denote the negative
Hessian of the deterministic mean at the true peak and let
$\Lambda=\Cov(\nabla Z_0)$ denote the noise-gradient covariance.  The
leading local peak-height shift has the general form
\begin{equation}
  B(Q)
  =
  \frac12\operatorname{tr}(H_Q^{-1}\Lambda).
\end{equation}
On the other hand,
\begin{equation}
  \frac12\frac{d}{dQ}\log\det H_Q
  =
  \frac12
  \operatorname{tr}
  \left(
    H_Q^{-1}\frac{dH_Q}{dQ}
  \right).
\end{equation}
The two coincide when
\begin{equation}
  \frac{dH_Q}{dQ}=\Lambda,
  \label{eq:matching-condition}
\end{equation}
or, more generally, when the two matrices have the same contraction with
$H_Q^{-1}$.  In the normalized matched-filter problem,
$H_Q=QG$ and $\Lambda=G$, so
Eq.~\eqref{eq:matching-condition} holds identically.
\end{remark}

\subsection{Localization-volume interpretation}

The leading covariance of the fitted template coordinates is
\begin{equation}
  \Cov(\thetahat)
  =
  \frac{1}{Q^2}G^{-1}
  +o(Q^{-2}).
\end{equation}
A local uncertainty volume therefore scales as
\begin{equation}
  V_{\rm loc}(Q)
  \propto
  [\det\Cov(\thetahat)]^{1/2}
  \propto
  Q^{-D}(\det G)^{-1/2}.
  \label{eq:local-volume}
\end{equation}
Hence
\begin{equation}
  -\frac12
  \frac{d}{dQ}\log V_{\rm loc}(Q)
  =
  \frac{D}{2Q}.
  \label{eq:bias-volume}
\end{equation}
The same shrinking localization cell is what makes the number of
effectively distinguishable search locations grow as $Q^D$ in the
high-threshold null problem.  In this sense,
Eq.~\eqref{eq:bridge-main} relates two views of the same local resolution
geometry: an upward optimization shift under the alternative and a
growing multiplicity of resolvable search locations under the null.

\section{Gaussian examples and higher-order corrections}
\label{sec:examples}

The leading results of Sec.~\ref{sec:local} depend only on the number
$D$ of locally identifiable template coordinates.  Model dependence
first enters through higher derivatives of the template manifold.
To separate these two levels explicitly, we consider two Gaussian
examples.  The first profiles only the signal location and will also
serve as the model for the local-to-global study below.  The second
profiles both location and width, providing a simple two-dimensional
manifold in which different tangent directions have the same leading
metric normalization but different higher-order corrections.

\subsection{One-dimensional location model}
\label{sec:d1}

Consider a normalized Gaussian template of fixed width $\sigma$ whose
location $m$ is profiled.  For two templates centered at $m_0$ and
$m=m_0+\delta m$, define their overlap as their inner product,
\begin{equation}
  \rho(\delta m)
  \equiv
  \ip{\bm s(m_0)}{\bm s(m)}
  =
  \exp\left[
    -\frac{(\delta m)^2}{4\sigma^2}
  \right].
  \label{eq:gaussian-overlap-d1}
\end{equation}
The local metric is therefore
\begin{equation}
  G
  =
  \frac{1}{2\sigma^2}.
\end{equation}
The leading predictions from Sec.~\ref{sec:local} become
\begin{align}
  \E[\widehat Q-Q]
  &=
  \frac{1}{2Q}+o(Q^{-1}),
  \label{eq:d1-bias-leading}
  \\
  \operatorname{RMS}(\widehat m-m_0)
  &=
  \frac{\sqrt{2}\,\sigma}{Q}
  +o(Q^{-1}),
  \label{eq:d1-mass-resolution}
  \\
  \Delta q
  &\xrightarrow{d}\chi^2_1.
  \label{eq:d1-deltaq}
\end{align}

To go beyond the universal tangent approximation, we expand the
stationarity equation for the profiled maximum and its height
iteratively in powers of $1/Q$.  Since the fitted displacement is
itself $O_{\rm p}(Q^{-1})$, higher derivatives of the template overlap
and of the Gaussian noise field generate successive curvature
corrections.  A derivation is given in
Appendix~\ref{app:local-higher-order}.  For the Gaussian location
manifold this gives
\begin{align}
  \E[\widehat Q-Q]
  &=
  \frac{1}{2Q}
  +\frac{3}{8Q^3}
  +\frac{25}{16Q^5}
  +O(Q^{-7}),
  \label{eq:d1-bias-nlo}
  \\
  \E[\Delta q]
  &=
  1+\frac{1}{2Q^2}
  +\frac{2}{Q^4}
  +O(Q^{-6}),
  \label{eq:d1-deltaq-nlo}
\end{align}
and, in the natural coordinate
\begin{equation}
  t=\frac{m-m_0}{\sqrt2\,\sigma},
\end{equation}
\begin{equation}
  \E[\widehat t^2]
  =
  \frac{1}{Q^2}
  +\frac{3}{Q^4}
  +\frac{45}{2Q^6}
  +O(Q^{-8}).
  \label{eq:d1-localization-nlo}
\end{equation}
This is precisely the metric-whitened coordinate
\[
  t=\sqrt{G}\,(m-m_0).
\]

The leading localization result becomes
\[
  Q\widehat t\xrightarrow{d}N(0,1).
\]

\subsection{Two-dimensional location--scale model}
\label{sec:d2}

We next allow both the position and the width of a Gaussian signal to
float.  It is convenient to parameterize the width by
\begin{equation}
  \eta=\log(\sigma/\sigma_0),
  \label{eq:eta}
\end{equation}
so that $\eta=0$ is an interior point and $\sigma>0$ automatically.
Taking $m_0=0$ and measuring $m$ in units of $\sigma_0$, the exact
normalized-template overlap with the true signal is
\begin{equation}
  \rho(m,\eta)
  =
  \frac{1}{\sqrt{\cosh\eta}}
  \exp\left[
    -\frac{m^2}{2(1+e^{2\eta})}
  \right].
  \label{eq:d2-overlap}
\end{equation}
At the true point,
\begin{equation}
  G
  =
  \begin{pmatrix}
    1/2 & 0\\
    0   & 1/2
  \end{pmatrix}.
  \label{eq:d2-metric}
\end{equation}
Thus the two profiled directions are indistinguishable at the level of
the leading tangent-space metric.  Any difference between their
finite-$Q$ behavior must therefore arise from higher derivatives of the
template manifold rather than from the leading local normalization.

The universal leading predictions are
\begin{align}
  \E[\widehat Q-Q]
  &=
  \frac{1}{Q}+o(Q^{-1}),
  \label{eq:d2-bias-leading}
  \\
  \Delta q
  &\xrightarrow{d}\chi^2_2,
  \label{eq:d2-deltaq-leading}
  \\
  \bm y_{\rm loc}
  &\xrightarrow{d}N(\bm0,I_2),
  \qquad
  r_{\rm loc}^2\xrightarrow{d}\chi^2_2.
  \label{eq:d2-localization-leading}
\end{align}

\subsubsection{Model-specific higher-order terms}

Applying the same higher-order expansion to the location--scale
manifold gives
\begin{align}
  \E[\widehat Q-Q]
  &=
  \frac{1}{Q}
  +\frac{9}{4Q^3}
  +O(Q^{-5}),
  \label{eq:d2-bias-nlo}
  \\
  \E[\widehat\eta]
  &=
  -\frac{1}{Q^2}
  +O(Q^{-4}),
  \label{eq:d2-eta-bias}
  \\
  \E[\Delta q]
  &=
  2+\frac{9}{2Q^2}
  +O(Q^{-4}),
  \label{eq:d2-deltaq-nlo}
  \\
  \E[r_{\rm loc}^2]
  &=
  2+\frac{35}{2Q^2}
  +O(Q^{-4}),
  \label{eq:d2-rloc2-nlo}
  \\
  \Var(y_m)
  &=
  1+\frac{6}{Q^2}
  +O(Q^{-4}),
  \label{eq:d2-var-m}
  \\
  \Var(y_\eta)
  &=
  1+\frac{11}{Q^2}
  +O(Q^{-4}).
  \label{eq:d2-var-eta}
\end{align}
The unequal coefficients in
Eqs.~\eqref{eq:d2-var-m} and~\eqref{eq:d2-var-eta} are therefore a
direct probe of higher-order manifold geometry rather than of the
leading metric.

\section{Numerical validation}
\label{sec:numerics}

\subsection{Simulation setup}

We tested the analytic results with ROOT pseudoexperiments using
independent unit Gaussian noise at each sampled data point and
discretely normalized Gaussian templates.  The fits were deliberately
local: they were initialized at the injected signal point and restricted
to a neighborhood intended to contain the signal-associated maximum,
rather than to perform a global look-elsewhere scan.

For the one-dimensional study we take $\sigma=1$ and sample the data at
101 equally spaced points over $[-6\sigma,6\sigma]$.  The local
location fit is restricted to $|m-m_0|\le2\sigma$ and evaluated on 81
scan points, followed by a three-point parabolic interpolation of the
maximum.  We use $2\times10^5$ pseudoexperiments at each signal
strength over $Q=2$--$20$.  For the two-dimensional study we take
$\sigma_0=1$ and use 121 points over
$[-7\sigma_0,7\sigma_0]$.  A safeguarded Newton maximization is
restricted to $|m|\le2\sigma_0$ and $|\eta|\le0.9$.  This study uses
$10^6$ pseudoexperiments per signal strength for
$Q=5,6,\ldots,10,12,15,20,25,30$, with the asymptotic coefficient fits
restricted to $Q\ge8$.  Convergence failures, boundary solutions, and
non-negative fitted Hessians were monitored as numerical diagnostics;
they are negligible in the high-$Q$ regime used for the asymptotic
comparisons.

In both examples the metric computed from the discretized templates
agrees with the continuum result: $G\simeq0.5/\sigma^2$ for the
one-dimensional shift and
\begin{equation}
  G_{mm}=G_{\eta\eta}\simeq0.5,
  \qquad
  G_{m\eta}\simeq0
\end{equation}
for the location--scale model.

\subsection{Leading dimension-dependent behavior}

The simulations first test the universal tangent-space predictions,
which depend only on the number of profiled template coordinates.
Figure~\ref{fig:local-validation}(a) shows the scaled amplitude bias
$Q\,\E[\widehat Q-Q]$.  The $D=1$ and $D=2$ results approach
$1/2$ and $1$, respectively, as predicted by
Eq.~\eqref{eq:bias-leading}.  In the one-dimensional model the fitted
location resolution likewise approaches
$\sqrt2\,\sigma/Q$.

At the distribution level, the profile gain approaches the predicted
$\chi^2_D$ law in both examples.
Figure~\ref{fig:local-validation}(b) shows the $D=1$ and $D=2$
distributions at $Q=20$ together with the corresponding
$\chi^2_1$ and $\chi^2_2$ densities.  The whitened localization
coordinates simultaneously approach independent unit Gaussians and
$r_{\rm loc}^2$ approaches $\chi_D^2$.

\subsection{Model-dependent finite-$Q$ corrections}

The same pseudoexperiments also test the first non-universal curvature
corrections.  These comparisons go beyond verifying the
$D$-dimensional tangent limit: they probe higher derivatives of the
specific Gaussian template manifolds.

For the two-dimensional amplitude bias, a useful control-variate
estimator is
\begin{equation}
  B_{\rm CV}
  =
  \widehat Q-Z_{\rm fix},
  \qquad
  Z_{\rm fix}=Z(\thetao).
  \label{eq:control-variate}
\end{equation}
Since $\E[Z_{\rm fix}]=Q$ exactly,
\begin{equation}
  \E[B_{\rm CV}]
  =
  \E[\widehat Q-Q],
\end{equation}
but the dominant radial noise cancels event by event.

Fitting the high-$Q$ simulations to
\begin{equation}
  Q\,\E[B_{\rm CV}]-1
  =
  \frac{b_3}{Q^2}
  +\frac{b_5}{Q^4}
\end{equation}
gives
\begin{equation}
  b_3=2.263\pm0.159,
\end{equation}
consistent with the analytic value $9/4=2.25$.  Similarly,
\begin{align}
  \E[\widehat\eta]
  &=
  -\frac{e_2}{Q^2}
  +O(Q^{-4}),
  &
  e_2
  &=
  0.988\pm0.016,
  \\
  \E[\Delta q]-2
  &=
  \frac{c_2}{Q^2}
  +O(Q^{-4}),
  &
  c_2
  &=
  4.52\pm0.32,
\end{align}
consistent with the predictions $e_2=1$ and $c_2=9/2$.
Figure~\ref{fig:local-validation}(a) also shows the corresponding
one-dimensional correction $3/(8Q^3)$, while
Fig.~\ref{fig:local-validation}(c) displays the width bias
$Q^2\E[\widehat\eta]\to-1$.

The localization observables converge more slowly to their first
asymptotic corrections than the amplitude-bias and profile-gain
observables.  Figure~\ref{fig:local-validation}(d) compares the
measured variances of the whitened coordinates with
\begin{equation}
  \Var(y_m)-1=\frac{6}{Q^2}+O(Q^{-4}),
  \qquad
  \Var(y_\eta)-1=\frac{11}{Q^2}+O(Q^{-4}).
\end{equation}
The simulations approach these asymptotic slopes at the largest signal
strengths but show substantial positive departures at moderate $Q$,
especially in the width direction.  These deviations do not indicate a
failure of the local Gaussian limit: the whitened coordinates still
approach independent unit Gaussians, and $r_{\rm loc}^2$ still approaches
$\chi^2_2$ as $Q\to\infty$.  Rather, they show that the omitted
$O(Q^{-4})$ and higher-order terms are numerically important for
localization observables over the finite-$Q$ range accessible in the
pseudoexperiments.  This contrasts with the amplitude bias and profile
gain, for which the first subleading corrections already provide a
much more accurate description.

\begin{figure}[ht]
  \centering
  \includegraphics[width=\textwidth]{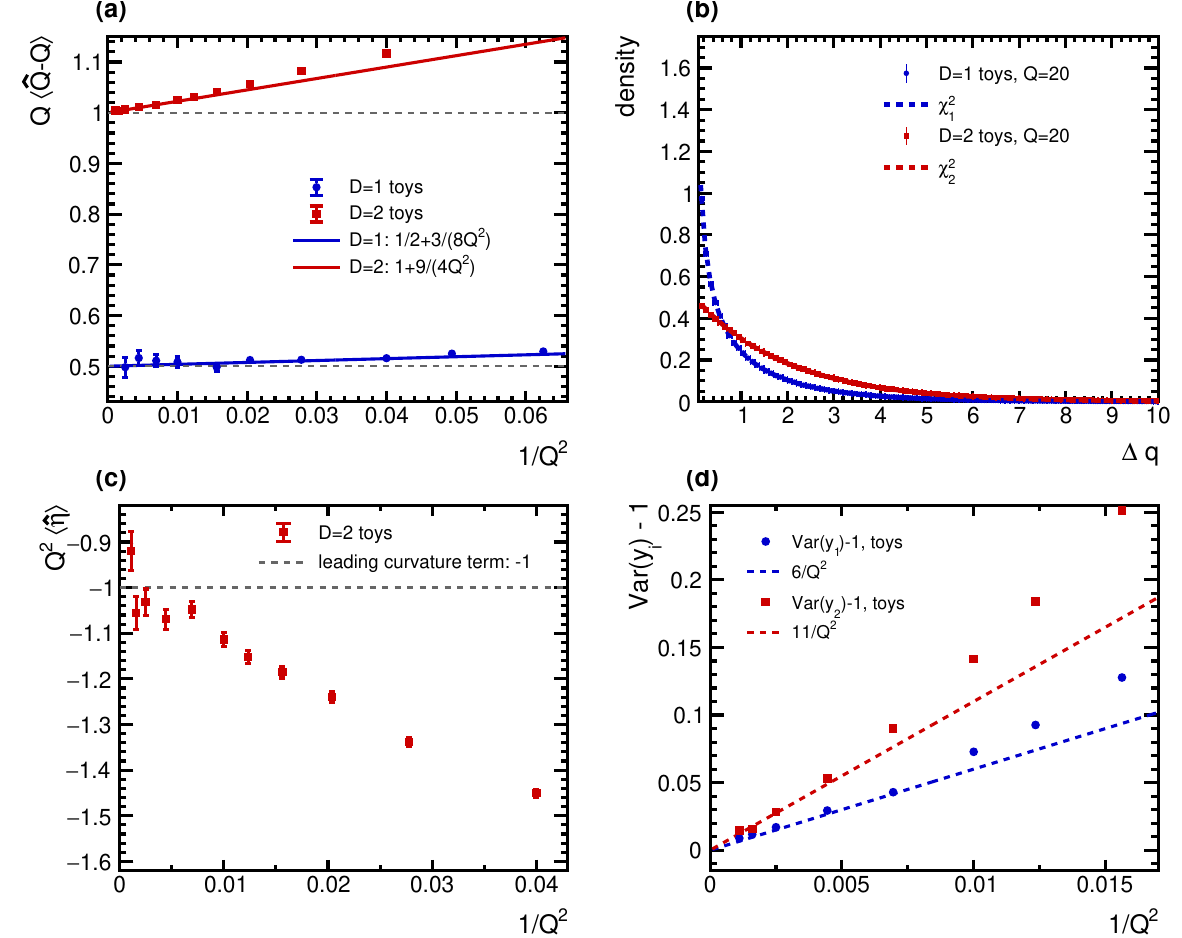}
  \caption{
  Numerical validation of the local profiling theory in the one- and
  two-dimensional Gaussian template models.
  \textbf{(a)} Scaled amplitude bias
  $Q\,\E[\widehat Q-Q]$ as a function of $1/Q^2$.
  The $D=1$ and $D=2$ simulations approach the universal leading limits
  $1/2$ and $1$, respectively, while the first model-dependent
  curvature corrections are described by
  $1/2+3/(8Q^2)$ and $1+9/(4Q^2)$.
  \textbf{(b)} Distribution of the profile gain
  $\Delta q=q_{\rm prof}-q_{\rm fixed}$ at $Q=20$.
  The one- and two-dimensional toy distributions are compared with the
  asymptotic $\chi^2_1$ and $\chi^2_2$ laws.
  \textbf{(c)} Width-coordinate bias in the two-dimensional
  location--scale model.
  The quantity $Q^2\,\E[\widehat\eta]$ approaches the leading curvature
  prediction $-1$ at large $Q$.
  \textbf{(d)} Finite-$Q$ corrections to the variances of the whitened
  localization coordinates.
  The leading predictions
  $\Var(y_m)-1=6/Q^2+O(Q^{-4})$ and
  $\Var(y_\eta)-1=11/Q^2+O(Q^{-4})$
  capture the asymptotic slopes, while the visible departures at
  moderate $Q$ show that higher-order localization corrections are
  numerically important.
  }
  \label{fig:local-validation}
\end{figure}
\clearpage

\section{From local profiling to global competition}
\label{sec:crossover}

The strong-signal expansion cannot be continued to the null simply by
setting $Q=0$: Eq.~\eqref{eq:theta-hat-leading} is singular in that
limit because there is no distinguished true template point around
which to expand.  This is precisely the cone-apex singularity discussed
in Sec.~\ref{sec:model}.  The correct crossover variable is therefore
not a perturbative correction to the local coordinate estimator, but
competition between distinct extrema of the matched-filter field.

\subsection{An exact decomposition}

In the one-dimensional Gaussian-shift example, let $M_S$ denote the
local maximum belonging to the basin of attraction of the injected
signal position $m_0$: starting at $m_0$, one follows the field uphill
to the associated local maximum.  Let $M_R$ denote the largest
\emph{other} competing maximum in the search interval.  The global
maximum is then exactly
\begin{equation}
  M=\max(M_S,M_R).
  \label{eq:global-max-decomp}
\end{equation}
Consequently,
\begin{equation}
  M-Q
  =
  (M_S-Q)
  +
  (M_R-M_S)_+,
  \label{eq:exact-bias-decomp}
\end{equation}
where $(x)_+=\max(x,0)$.  Taking expectations gives
\begin{equation}
  B_{\rm global}(Q)
  =
  B_{\rm local}(Q)
  +
  B_{\rm remote}(Q),
  \label{eq:bias-local-remote}
\end{equation}
with
\begin{equation}
  B_{\rm remote}(Q)
  =
  \E[(M_R-M_S)_+].
  \label{eq:Bremote}
\end{equation}
This decomposition is exact and does not assume independence or a
Poisson process of extrema.

At large $Q$, remote wins are exponentially rare and
$B_{\rm global}\simeq B_{\rm local}$.  At lower $Q$, the second term
activates and introduces the total search range.  In simulations with
$\sigma=1$ and search half-ranges $L=6,10,14$, the local bias at $Q=1$
is approximately $0.527$, while the corresponding global biases are
$0.777$, $0.939$, and $1.053$.  The remote-win probabilities are
approximately $0.354$, $0.498$, and $0.582$, respectively.  By
$Q\simeq5$--$6$, the global and local biases are nearly indistinguishable
on the scale of the simulation.

\subsection{Dependence between signal-associated and remote maxima}

We now test two assumptions that enter when remote competitors are
approximated by an independent background-only field: whether the signal-associated and remote maxima are statistically dependent, and
whether the signal itself changes the marginal distribution of the
remote maximum.

A useful diagnostic separates two approximations that are otherwise
easy to confuse.  
For each signal-containing toy $i$, let $M_{S,i}$ denote the
signal-associated maximum and $M_{R,i}$ the highest competing maximum
outside the signal basin.  We compare
\begin{align}
  M_{\rm actual}
  &=
  \max(M_{S,i},M_{R,i}),
  \\
  M_{\rm shuf}
  &=
  \max(M_{S,i},M_{R,\pi(i)}),
  \\
  M_{\rm null}
  &=
  \max(M_{S,i},M^0_{R,i}),
  \label{eq:actual-shuffled-null}
\end{align}
where $\pi$ is a random permutation of the signal toys and
$M^0_{R,i}$ is the corresponding remote maximum drawn from an
independent background-only toy.  The first construction preserves the
actual event-by-event pairing between the signal-associated and remote
maxima.  The shuffled construction preserves their separate marginal
distributions but removes this dependence.  The final construction
additionally replaces the remote maximum from a signal-containing field
by one drawn from a null field, testing whether the presence of the
signal changes the remote-maxima distribution itself.

Both corrections are small in the regime studied.  For example, at
$L=10$ and $Q=1$ the global biases are
\begin{equation}
  B_{\rm actual}=0.9392,\qquad
  B_{\rm shuf}=0.9472,\qquad
  B_{\rm null}=0.9278.
  \label{eq:dependence-example}
\end{equation}
The three constructions separate two small effects.  First,
$B_{\rm actual}<B_{\rm shuf}$ shows that the event-by-event dependence
between the signal-associated and remote maxima slightly reduces the
global greedy bias.  Second, $B_{\rm shuf}>B_{\rm null}$ shows that
remote maxima drawn from signal-containing toys are slightly more
competitive than those drawn from an independent background-only field.
The two corrections have opposite signs and therefore partially cancel.
Although the ordinary Pearson correlation between $M_S$ and $M_R$ is
small, this does not exclude dependence in the rare tail configurations
that determine whether a remote maximum overtakes the signal-associated
one.

The null-remote approximation becomes particularly informative once winning competitors are classified by distance from the injected signal.  At low and moderate $Q$, most winning competitors are genuinely
remote.  In the rare-win regime at larger $Q$, however, the surviving
competitors are increasingly concentrated near the signal, especially
for shorter searches.  This produces a small signal-tail contamination
of the remote marginal and explains why the highest-$Q$ remote-win
probabilities are not described perfectly by an independent null field.

\subsection{Dimension, curvature, and search volume}

The crossover between the signal-associated maximum and remote
competitors is summarized in Fig.~\ref{fig:crossover}.  It also helps
to separate three levels of the problem.  The dimensionality $D$ of the
template manifold controls the universal leading local effect: it fixes
the number of tangent directions available to the profile fit and hence
the leading $D/(2Q)$ amplitude bias and $\chi^2_D$ profile gain.
Curvature of the template manifold enters only at the next order,
producing model-dependent finite-$Q$ corrections to these local
asymptotic results.
As Fig.~\ref{fig:crossover} shows, increasing the search range has
essentially no effect once the signal is strong enough for the
signal-associated maximum to dominate, but it increases both the
remote contribution and the remote-win probability at smaller $Q$.

The total search volume, boundaries, and correlations among competing
extrema become relevant when the maximization is no longer confined to
the signal basin.  These are truly global ingredients: they control
how often a remote maximum can overtake the signal-associated one and
therefore determine the departure of the global bias from its local
strong-signal limit.  In this regime the remote extrema cannot in
general be represented by a fixed number of independent trials; their
correlations and spatial distribution become part of the problem.

\begin{figure}[t]
  \centering
  \includegraphics[width=\textwidth]{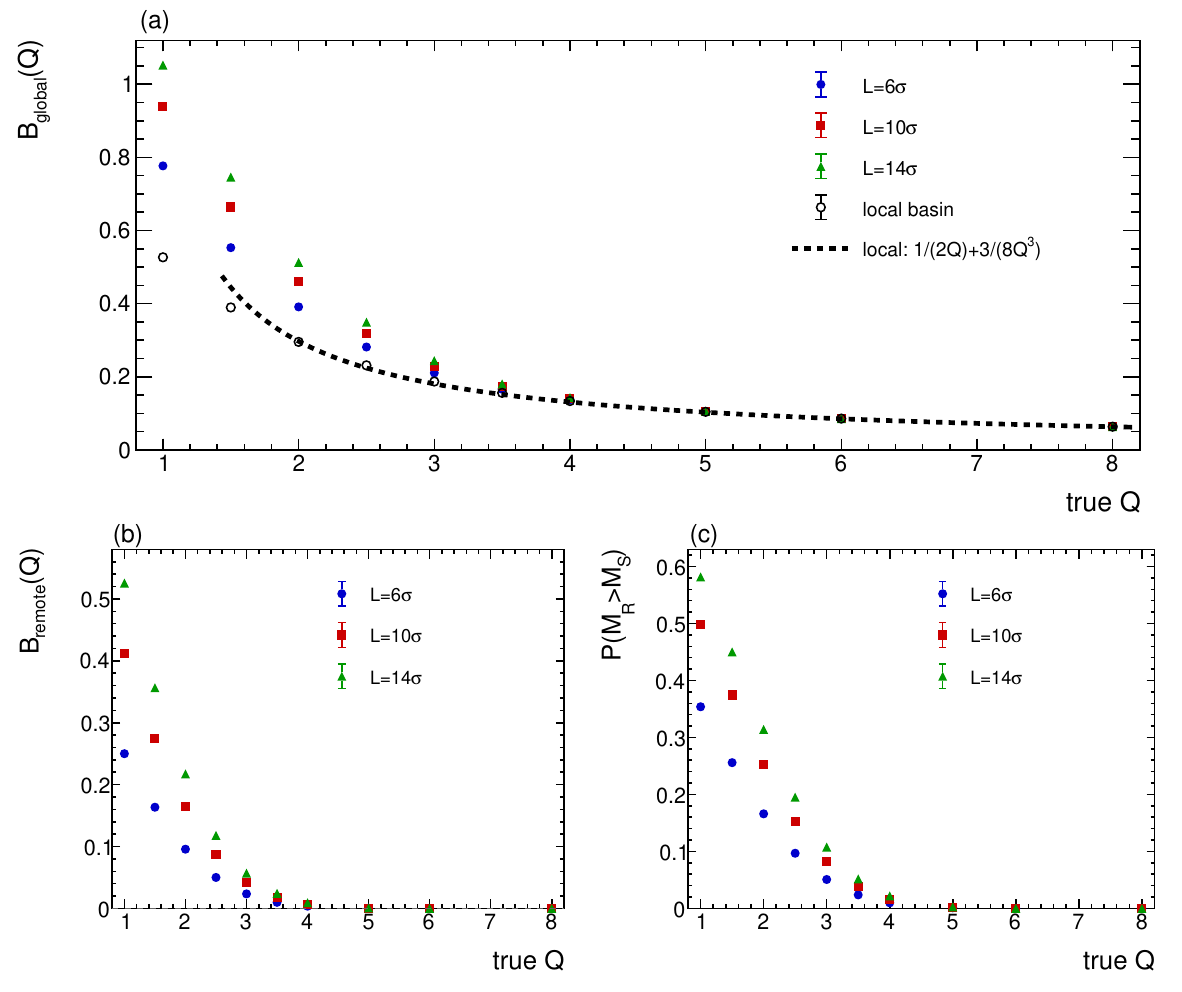}
  \caption{
  Local-to-global crossover in the one-dimensional Gaussian-shift model
  for search half-ranges $L=6\sigma$, $10\sigma$, and $14\sigma$.
  \textbf{(a)} Global greedy bump bias
  $B_{\rm global}(Q)=\E[M-Q]$ as a function of the true signal amplitude
  $Q$.  The signal-associated local-basin bias is also shown, together
  with the strong-signal expansion
  $1/(2Q)+3/(8Q^3)$.
  At large $Q$ the global results become independent of the search range
  and approach the local prediction, whereas at smaller $Q$ the
  increasing probability of a competing remote maximum produces a
  search-volume-dependent excess bias.
  \textbf{(b)} Remote contribution
  $B_{\rm remote}(Q)=\E[(M_R-M_S)_+]$, defined by the exact decomposition
  $B_{\rm global}=B_{\rm local}+B_{\rm remote}$.
  The remote term vanishes in the strong-signal regime and grows with
  the available search range as $Q$ decreases.
  \textbf{(c)} Probability that the highest remote competitor overtakes
  the signal-associated maximum, $\Pr(M_R>M_S)$.
  Together, panels (b) and (c) show how the global look-elsewhere
  behavior emerges continuously from competition between the local
  signal maximum and remote extrema.
  Statistical uncertainties are smaller than the marker size.
  }
  \label{fig:crossover}
\end{figure}
\clearpage

\section{Finite-threshold corrections to the Poisson approximation}
\label{sec:cumulants}

High excursion peaks of a smooth Gaussian field become asymptotically
rare, and the probability of finding no peaks above a high threshold
is well approximated by a Poisson model.  At lower thresholds, however,
correlations among extrema make this Poisson approximation
imperfect.  It is therefore useful to describe the departure from
Poisson behavior directly in terms of the counting process.

\subsection{Factorial cumulants and the void probability}

Let $N_u$ be the number of maxima above level $u$ under the null; the
probability $\Pr_0(N_u=0)$ is commonly called the void probability. Define its
factorial cumulants $\kappa_n(u)$ through
\begin{equation}
  \log \E[(1+t)^{N_u}]
  =
  \sum_{n\ge1}
  \frac{\kappa_n(u)}{n!}\,t^n.
  \label{eq:factorial-cumulant-def}
\end{equation}
In particular,
\begin{align}
  \kappa_1
  &=
  \E[N_u],
  \\
  \kappa_2
  &=
  \E[N_u(N_u-1)]-\E[N_u]^2,
  \label{eq:kappa2-def}
  \\
  \kappa_3
  &=
  \E[N_u(N_u-1)(N_u-2)]
  -3\E[N_u(N_u-1)]\E[N_u]
  +2\E[N_u]^3.
  \label{eq:kappa3-def}
\end{align}
Whenever the series may be evaluated at $t=-1$, the void probability is
\begin{equation}
  \log\Pr_0(N_u=0)
  =
  -\kappa_1
  +\frac{\kappa_2}{2}
  -\frac{\kappa_3}{6}
  +\cdots .
  \label{eq:void-cumulant}
\end{equation}
The Poisson approximation consists of setting all
$\kappa_{n>1}=0$.

Equation~\eqref{eq:void-cumulant} usefully
separates the one-point intensity of maxima from their higher-order
spatial correlations.  The first correction is not another effective
search length; it is the connected two-point structure of the extrema
process.

\subsection{Finite-interval first moment}

For the finite one-dimensional crossover search, the possible
contributors to the global maximum consist of the interior local
maxima together with the two endpoints of the search interval.
To reproduce the local-versus-remote decomposition of
Sec.~\ref{sec:crossover} under the null, we exclude the maximum whose
attraction basin contains the reference point $m_0$.  We denote this
tagged maximum by $M_{\rm tag}^0$ and its survival function by
\begin{equation}
  S_{\rm tag}(u)
  =
  \Pr_0(M_{\rm tag}^0>u).
\end{equation}
Let $N_R(u)$ denote the number of remaining extrema above $u$ after
this tagged maximum is excluded, with the two endpoints retained as
possible contributors.  From this point through the end of this
section, $\kappa_n(u)$ denotes the factorial cumulants of this
remote-extrema counting process.

For the Gaussian matched-filter covariance
\begin{equation}
  \rho(r)
  =
  \exp[-r^2/(4\sigma^2)],
  \label{eq:rho-gaussian-cumulant}
\end{equation}
the one-point Kac--Rice density of interior local maxima of height $x$ is
\begin{equation}
  \nu(x)
  =
  \frac{\phi(x)}{\sigma\sqrt{\pi}}
  \left[
    \frac{\phi(x/\sqrt2)}{\sqrt2}
    +\frac{x}{2}\Phi(x/\sqrt2)
  \right],
  \label{eq:nu-maxima}
\end{equation}
and we write
\begin{equation}
  \nu_{>}(u)
  =
  \int_u^\infty \nu(x)\,dx.
  \label{eq:nu-tail}
\end{equation}
The corresponding first moment for the remote-extrema count is
\begin{equation}
  \kappa_1^{\rm model}(u)
  =
  2L\,\nu_{>}(u)
  +2\overline\Phi(u)
  -S_{\rm tag}(u).
  \label{eq:kappa1-finite}
\end{equation}
The second term is the finite-interval boundary contribution.  It is
numerically important at the moderate thresholds relevant to the
crossover and was absent from our first effective-length approximation.

For $L=10\sigma$, the measured and predicted first moments agree closely;
for example,
\begin{equation}
  \kappa_1^{\rm toy}(1.0)=1.31575,
  \qquad
  \kappa_1^{\rm model}(1.0)=1.32372,
\end{equation}
and at $u=2.0$ the corresponding values are
$0.24217$ and $0.24275$.

\subsection{Size of the non-Poisson correction}

A $5\times10^5$-toy null study makes the factorial hierarchy explicit.
Table~\ref{tab:void-cumulants} gives representative results for
$L=10\sigma$.  The $\kappa_1$-only Poisson approximation misses the void
probability by percent-level amounts in the crossover region.  Including
$\kappa_2$ removes most of the error, and including $\kappa_3$ makes the
result essentially indistinguishable from the toy estimate.

\begin{table}[t]
\centering
\caption{Void probability for the remote-extrema process at
$L=10\sigma$.  Here $F_{\rm toy}=\Pr_0[N_R(u)=0]$,
$F_1=e^{-\kappa_1}$,
$F_2=e^{-\kappa_1+\kappa_2/2}$, and
$F_3=e^{-\kappa_1+\kappa_2/2-\kappa_3/6}$.}
\label{tab:void-cumulants}
\begin{tabular}{ccccc}
\toprule
$u$ & $F_{\rm toy}$ & $F_1$ & $F_2$ & $F_3$ \\
\midrule
1.0 & 0.25336 & 0.26827 & 0.25223 & 0.25349 \\
1.5 & 0.54415 & 0.53249 & 0.54118 & 0.54411 \\
2.0 & 0.79724 & 0.78492 & 0.79656 & 0.79728 \\
2.5 & 0.93388 & 0.92895 & 0.93385 & 0.93388 \\
3.0 & 0.98368 & 0.98246 & 0.98367 & 0.98368 \\
\bottomrule
\end{tabular}
\end{table}

The sign of $\kappa_2$ for this finite, tagged extrema process is not
fixed:
it is negative at lower threshold and becomes mildly positive in
part of the intermediate range.  It is therefore more accurate to speak
of \emph{non-Poisson extremal correlations} than simply of repulsion.
The clean interior-maxima process considered next removes the tagged and
boundary conventions and exposes the underlying two-point random-field
geometry directly.

\begin{figure}[t]
  \centering
  \includegraphics[width=\textwidth]{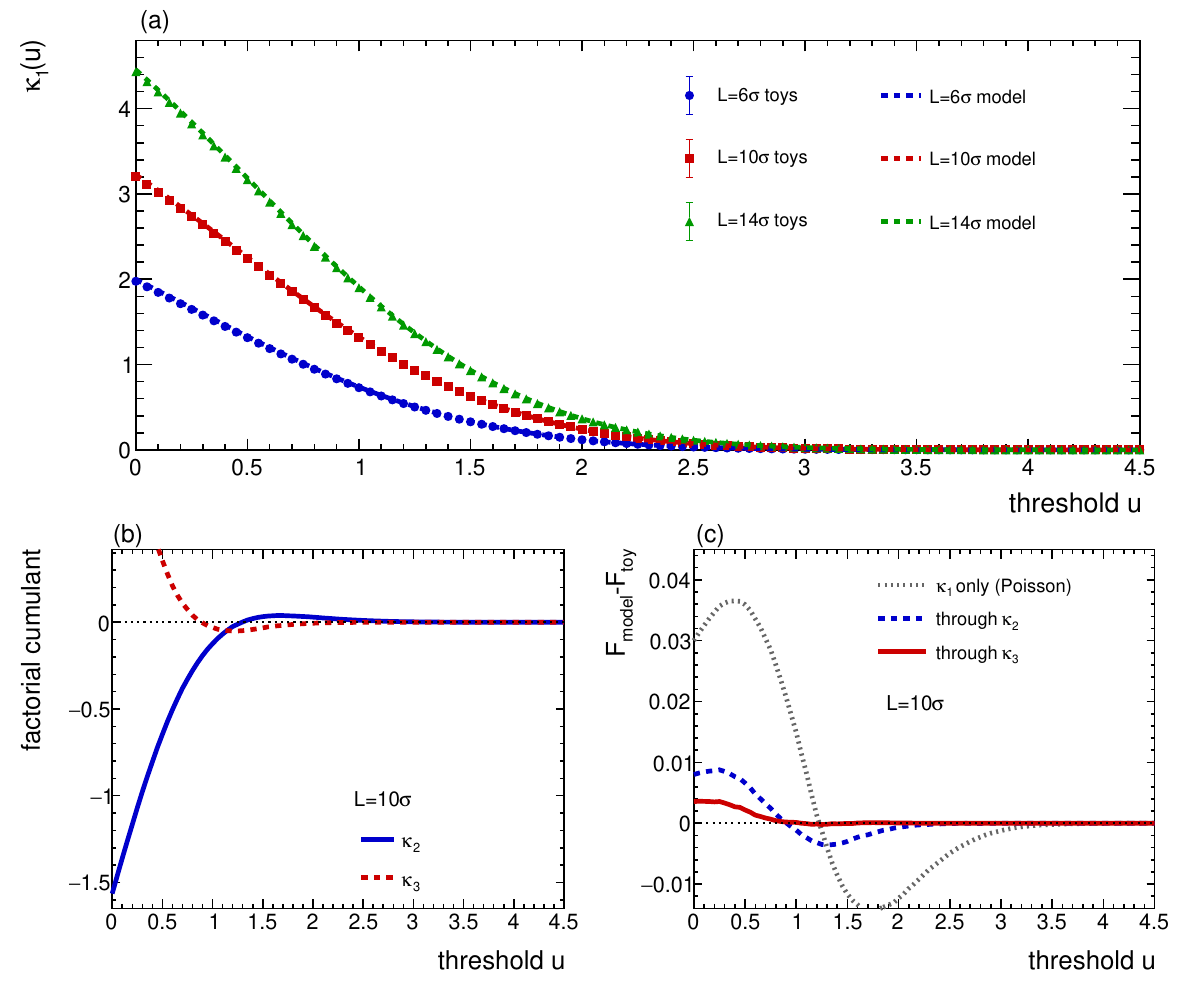}
  \caption{
  Finite-threshold factorial-cumulant corrections for the remote-extrema
  process.
  \textbf{(a)} First factorial cumulant
  $\kappa_1(u)=\E[N_R(u)]$ for search half-ranges
  $L=6\sigma$, $10\sigma$, and $14\sigma$.
  Toy estimates are compared with the one-point model
  $
    \kappa_1^{\rm model}(u)
    =
    2L\,\nu_{>}(u)
    +2\overline{\Phi}(u)
    -S_{\rm tag}(u),
  $
  which includes the interior Kac--Rice contribution, the two
  finite-interval endpoints, and subtraction of the tagged maximum.
  \textbf{(b)} Second and third factorial cumulants,
  $\kappa_2(u)$ and $\kappa_3(u)$, for $L=10\sigma$.
  Their nonzero values demonstrate that the finite-threshold extrema
  process is not Poisson; in particular, the change of sign of
  $\kappa_2$ shows that the correlations cannot be characterized simply
  as repulsive.
  \textbf{(c)} Residual
  $F_{\rm model}(u)-F_{\rm toy}(u)$ of the void probability
  $F(u)=\Pr_0[N_R(u)=0]$ for successive truncations of the factorial-cumulant
  expansion.
  The Poisson approximation, which retains only $\kappa_1$, shows a
  visible finite-threshold discrepancy.
  Including $\kappa_2$ removes most of this error, while including
  $\kappa_3$ makes the prediction essentially indistinguishable from the
  toy result over the range shown.
  }
  \label{fig:cumulants}
\end{figure}
\clearpage


\section{Two-point correlations of null maxima}
\label{sec:kappa2}

Section~\ref{sec:cumulants} showed that at finite threshold the maxima
of the null field are not distributed as an independent Poisson
process.  The leading departure is described by the second factorial
cumulant $\kappa_2$, which measures whether pairs of maxima occur more
or less frequently than they would for independent extrema.  We now
ask where this correction comes from.

The answer can be obtained directly from the geometry of the smooth
Gaussian field.  We calculate the pair density for two local maxima
above threshold at a specified separation $r$, and then integrate the
result over the search interval.  This gives an analytic prediction
for $\kappa_2$, which can be compared directly with the
pseudoexperiments of Sec.~\ref{sec:cumulants}.  The same calculation
also reveals a strong short-distance suppression: two distinct maxima
cannot occur arbitrarily close together in a smooth field.

For this calculation it is convenient to consider only interior local
maxima.  We therefore temporarily remove the two finite-interval
endpoints and the tagged-basin convention used in
Sec.~\ref{sec:cumulants}, leaving the stationary process to which the
standard two-point Kac--Rice formula applies directly
\cite{AzaisWschebor2009,AdlerTaylor2007}.

For this one-dimensional calculation, write
$Z(t)\equiv Z_0(t)$ for the stationary null matched-filter field.
Then
\begin{equation}
  N_u^{\rm int}
  =
  \#\{
    t\in(-L,L):
    Z'(t)=0,\;
    Z''(t)<0,\;
    Z(t)>u
  \},
  \label{eq:Nint}
\end{equation}
so that $N_u^{\rm int}$ simply
counts the interior peaks that exceed the threshold $u$.  Writing
\begin{equation}
  T=2L,
  \qquad
  \lambda(u)=\nu_{>}(u),
\end{equation}
where $\lambda(u)$ is the mean number of such maxima per unit length,
we have
\begin{equation}
  \E[N_u^{\rm int}]
  =
  T\lambda(u).
  \label{eq:kappa1-interior}
\end{equation}

\subsection{The two-point intensity}

Let $R_2(r;u)$ denote the density of pairs of interior local maxima
above $u$ whose separation is $r$.  If the maxima were independent,
this pair density would simply be $\lambda(u)^2$.  The difference
\begin{equation}
  R_2(r;u)-\lambda(u)^2
\end{equation}
therefore measures the excess or deficit of pairs at separation $r$.
Stationarity gives
\begin{equation}
  \E[ 
    N_u^{\rm int}
    (N_u^{\rm int}-1)
  ]
  =
  2\int_0^T
  (T-r)R_2(r;u)\,dr,
  \label{eq:factorial-second-KR}
\end{equation}
and hence
\begin{equation}
  \kappa_2^{\rm int}(u)
  =
  2\int_0^T
  (T-r)
  \left[
    R_2(r;u)-\lambda(u)^2
  \right]dr.
  \label{eq:kappa2-KR-main}
\end{equation}
Thus the second factorial cumulant is the integrated departure from
independent pair counting.

It is useful to express the same information through the dimensionless
pair-correlation function
\begin{equation}
  g_2(r;u)
  =
  \frac{R_2(r;u)}{\lambda(u)^2}.
  \label{eq:g2-def}
\end{equation}
Here $g_2=1$ corresponds to independent maxima, $g_2<1$ to a
suppression of pairs at that separation, and $g_2>1$ to an enhancement.
Equation~\eqref{eq:kappa2-KR-main} is therefore the weighted integral
of the connected pair correlation $g_2-1$.

To have local maxima at two points separated by $r$, three elementary
conditions must hold at both points: the field value must exceed $u$,
the slope must vanish, and the curvature must be negative.  Define
\begin{equation}
  \bm S=(Z'(0),Z'(r)),
  \qquad
  \bm H=(-Z''(0),-Z''(r)),
\end{equation}
so that positive components of $\bm H$ correspond to downward
curvature.  For a scalar curvature value $h$, define
$h_+=\max(h,0)$.  Let $p_{\bm S}(\bm 0)$ denote the
joint probability density of the two slopes evaluated at
$\bm S=\bm 0$.  The two-point Kac--Rice formula then combines
this zero-slope density with the conditional probability that both
points lie above threshold and have the required curvature:
\begin{equation}
  R_2(r;u)
  =
  p_{\bm S}(\bm 0)\,
  \E\!\left[
    H_{0,+}H_{r,+}\,
    \mathbf 1_{\{Z(0)>u,Z(r)>u\}}
    \mid \bm S=\bm 0
  \right].
  \label{eq:R2-KR}
\end{equation}
All covariances entering Eq.~\eqref{eq:R2-KR} follow by differentiating
$\rho(r)$.  For example,
\begin{align}
  \rho'(r)
  &=
  -\frac{r}{2\sigma^2}\rho(r),
  \\
  \rho''(r)
  &=
  \left(
    \frac{r^2}{4\sigma^4}
    -\frac{1}{2\sigma^2}
  \right)\rho(r),
\end{align}
with analogous expressions for the third and fourth derivatives.

\subsection{Principal coordinates}

At small separation the two field values $Z(0)$ and $Z(r)$ become nearly
identical, making a direct numerical integration poorly conditioned.
The natural variables are therefore the symmetric and antisymmetric
height combinations
\begin{equation}
  V=\frac{Z(0)+Z(r)}{\sqrt2},
  \qquad
  W=\frac{Z(0)-Z(r)}{\sqrt2},
  \label{eq:VW}
\end{equation}
whose conditional variances, after imposing $\bm S=\bm 0$, are denoted
\begin{equation}
  \Lambda_+=\Var(V\mid\bm S=\bm 0),
  \qquad
  \Lambda_-=\Var(W\mid\bm S=\bm 0).
\end{equation}
In these coordinates the Gaussian covariance is diagonal and the
numerical integration remains well behaved even as $r\to0$.  The
standardization used for the numerical calculation, and the detailed
small-separation expansion, are given in
Appendix~\ref{app:short-distance}.

\subsection{Comparison with simulation}

Table~\ref{tab:kappa2-theory} compares the converged two-point
Kac--Rice calculation with $5\times10^5$ interior-only null
pseudoexperiments for $L=10\sigma$.  The agreement is at the percent
level or better and all four differences are below two quoted toy
standard errors.

\begin{table}[t]
\centering
\caption{Second factorial cumulant of the clean interior-maxima process
for $L=10\sigma$.  The final column is
$(\kappa_2^{\rm toy}-\kappa_2^{\rm KR})/\sigma_{\rm toy}$.}
\label{tab:kappa2-theory}
\begin{tabular}{cccc}
\toprule
$u$ & $\kappa_2^{\rm toy}$ &
$\kappa_2^{\rm KR}$ & pull \\
\midrule
0.5 & $-1.173308\pm0.002739$ & $-1.177371$ & $+1.48$ \\
1.0 & $-0.467233\pm0.001961$ & $-0.470478$ & $+1.65$ \\
1.5 & $-0.118681\pm0.001048$ & $-0.120039$ & $+1.30$ \\
2.0 & $-0.018479\pm0.000466$ & $-0.018343$ & $-0.29$ \\
\bottomrule
\end{tabular}
\end{table}

The structure behind these numbers is shown in
Fig.~\ref{fig:kappa2-KR}.  Panel~(a) displays $g_2(r;u)$.  Nearby
maxima are strongly suppressed, so $g_2<1$ at small $r$; this is
followed by a modest enhancement at intermediate separation, while at
large separation $g_2\to1$ and the maxima become effectively
independent.  The negative values of $\kappa_2^{\rm int}$ in
Table~\ref{tab:kappa2-theory} are therefore the integrated consequence
of this nontrivial pair structure rather than of a fixed number of
independent ``correlation cells.''  Panel~(c) shows that integrating
the calculated pair correlation reproduces the second factorial
cumulant measured in the pseudoexperiments over all three search
ranges.

\subsection{Quartic exclusion of nearby maxima}

There is a simple geometric reason for the strong short-distance
suppression.  Two distinct maxima separated by a very small distance
cannot be chosen independently.  Because the field is smooth, their
heights become almost equal, both slopes must vanish, and the field
must bend downward at both locations while accommodating the turning
structure between them.  The allowed configurations therefore occupy
a rapidly shrinking region of the joint Gaussian phase space as
$r\to0$.  The principal-coordinate expansion makes this statement
quantitative.

Writing $x=r/\sigma$, the conditional variance of the antisymmetric
height mode vanishes rapidly,
\begin{equation}
  \Lambda_-
  =
  \frac{x^6}{384}
  +O(x^8),
  \label{eq:lambda-minus-leading}
\end{equation}
whereas the symmetric variance remains finite,
$\Lambda_+=4/3+O(x^2)$.  Requiring both points to be maxima further
restricts the antisymmetric mode, while the two positive curvatures
are themselves of order $x^2$.  The apparent $1/x$ singularity in the
zero-slope density is cancelled by the shrinking available phase
space.  Combining these factors gives
\begin{equation}
  R_2(r;u)
  =
  \frac{A(u)}{\sigma^2}
  \left(\frac{r}{\sigma}\right)^4
  +o(r^4),
  \label{eq:R2-quartic}
\end{equation}
or, equivalently,
\begin{equation}
  \boxed{
  g_2(r;u)
  \sim
  C(u)
  \left(\frac{r}{\sigma}\right)^4.
  }
  \label{eq:g2-quartic}
\end{equation}
As shown in Fig.~\ref{fig:kappa2-KR}(b), dividing the numerical pair
correlation by $(r/\sigma)^4$ produces a finite limit as $r\to0$,
confirming this quartic exclusion law.

The coefficient $C(u)$ is given in closed form in
Appendix~\ref{app:short-distance}.  For
$u=0.5,1,1.5,2$ it is approximately
\begin{equation}
  C(u)
  =
  0.01672,\;
  0.02399,\;
  0.03884,\;
  0.07118,
  \label{eq:C-values}
\end{equation}
respectively.  For example, at $u=1$ and $r=0.1\sigma$,
Eq.~\eqref{eq:g2-quartic} predicts $g_2=2.399\times10^{-6}$, compared
with $2.40\times10^{-6}$ from the full two-point calculation.

\begin{figure}[t]
  \centering
  \includegraphics[width=\textwidth]{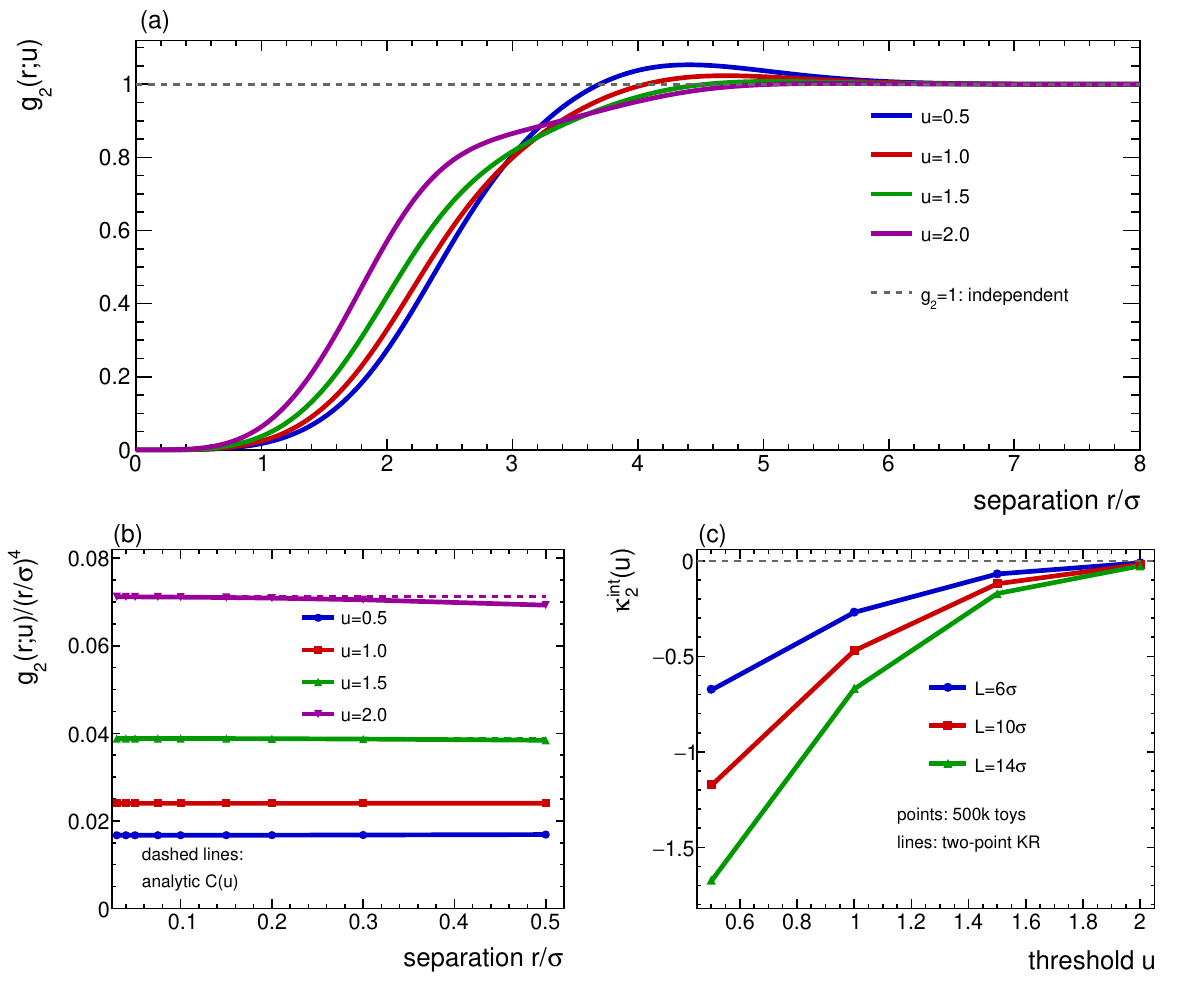}
  \caption{
  Two-point structure of the interior-maxima process in the
  one-dimensional Gaussian matched-filter field.
  \textbf{(a)} Pair correlation
  $g_2(r;u)=R_2(r;u)/\lambda(u)^2$ for thresholds
  $u=0.5,1,1.5,2$.
  Nearby maxima are strongly suppressed, the correlation develops a
  modest intermediate-separation enhancement, and
  $g_2(r;u)\to1$ at large separation, corresponding to asymptotically
  independent extrema.
  \textbf{(b)} Short-distance behavior of the same pair correlation,
  displayed as $g_2(r;u)/(r/\sigma)^4$.
  The numerical two-point Kac--Rice calculation approaches the analytic
  constants $C(u)$, shown by the dashed lines, confirming the quartic
  exclusion law
  $g_2(r;u)\sim C(u)(r/\sigma)^4$ as $r\to0$.
  \textbf{(c)} Second factorial cumulant of the clean interior-maxima
  process for search half-ranges $L=6\sigma$, $10\sigma$, and
  $14\sigma$.
  Points show results from $5\times10^5$ pseudoexperiments and lines show
  the two-point Kac--Rice prediction
  $
    \kappa_2^{\rm int}(u)
    =
    2\int_0^T (T-r)
    \left[R_2(r;u)-\lambda(u)^2\right]dr,
    \qquad T=2L.
  $
  The calculation reproduces both the threshold dependence and the
  search-range dependence of the simulated second factorial cumulant at
  the percent level or better; the small residual differences are
  consistent with the finite lattice spacing of the pseudoexperiments
  relative to the continuum Kac--Rice calculation.
  }
  \label{fig:kappa2-KR}
\end{figure}
\clearpage

\section{Conclusions}
\label{sec:conclusions}

We have placed the greedy bump bias and the look-elsewhere effect in a
common matched-filter geometry.  Profiling over $D$ smooth signal
coordinates exposes $D$ Gaussian tangent directions.  Under a strong
signal, their squared components yield an asymptotic $\chi_D^2$ profile
gain and a leading positive amplitude shift $D/(2Q)$.  Under the null,
the same template metric controls the conditional curvature of a high
excursion, producing the $u^D$ factor in the leading trials factor.  This
gives the asymptotic differential bridge
\begin{equation}
  \boxed{
  \E_Q[\widehat Q-Q]
  =
  \left.
  \frac12\frac{d}{du}\log{\rm TF}_0(u)
  \right|_{u=Q}
  +o(Q^{-1}),
  }
\end{equation}
together with the corresponding likelihood-gain relation.  The
one-dimensional floating-location and two-dimensional
location--scale examples verify the universal leading laws and exhibit
model-dependent higher-order curvature corrections.

The global problem separates naturally into the signal-associated local
maximum and the best competing maximum.  Their maximum gives an exact
decomposition of the global bias into local profiling and remote-peak
competition.  This makes clear why the strong-signal bias is insensitive
to total search volume at leading order, while the weak- and
moderate-signal regimes recover the familiar global multiplicity.

At finite threshold the extrema process contains information beyond its
mean density.  Writing the null void probability in factorial cumulants
shows directly where the Poisson approximation breaks down.  In the
Gaussian one-dimensional example, the second factorial cumulant captures
the leading correction and the third makes the void probability nearly
exact over the crossover range.  For the clean interior-maxima process,
we calculate $\kappa_2$ from the two-point Kac--Rice density and find
quantitative agreement with simulation.  The same calculation yields a
quartic exclusion law
\begin{equation}
  g_2(r;u)
  \sim
  C(u)(r/\sigma)^4
\end{equation}
for nearby maxima.

The resulting hierarchy is therefore more precise than a single
effective-trials-factor picture.  Local dimension controls the universal
strong-signal bias; curvature controls its subleading corrections;
one-point Kac--Rice controls the leading null multiplicity; and
two-point extremal correlations control the first finite-threshold
departure from the Poisson limit.  In this sense the greedy-bump and
look-elsewhere problems are not identical phenomena, but complementary
limits of the same profiling and random-field geometry.

\subsection{Scope and open directions}
\label{sec:outlook}

The analysis above establishes the leading local-to-global connection
for the Gaussian matched-filter model, but several extensions remain.
The exact decomposition of the global bias into signal-associated and
remote contributions is general, whereas the factorization of those
two extrema is only approximate.  The residual dependence is small in
the examples studied, but the rare high-$Q$ remote wins reveal a
signal-tail contamination of nearby competitors.  A fully
non-centered two-point treatment could quantify this correction.
Likewise, the empirical third factorial cumulant removes most of the
small residual left after the $\kappa_2$ correction; a three-point
Kac--Rice calculation would provide its analytic counterpart.

Other extensions concern the local geometry itself.  Physical
constraints, such as a non-negative width or signal strength, can
truncate tangent directions, while nearly degenerate signal
parameters can reduce the effective rank of $G$.  This suggests
generalizations in which the nominal dimension $D$ is replaced by a
boundary- or rank-dependent effective dimension.  Beyond leading
order, the $D=1$ and $D=2$ examples also show that the amplitude and
localization corrections depend on higher derivatives of the template
manifold.  Expressing these terms in coordinate-invariant geometric
form would clarify which features remain universal beyond the tangent
approximation.

Finally, the Gaussian-noise model used here was chosen to isolate the
geometry cleanly.  Realistic resonance searches are more often based
on Poisson or extended likelihoods.  Establishing precisely which of
the tangent-space, local-bias, and finite-threshold results survive
under local asymptotic normality is therefore an important next step.

The extensions described above would test how far the common geometric picture developed here survives beyond the idealized Gaussian setting.

\section*{Acknowledgments}

The author acknowledges the assistance from OpenAI's ChatGPT (GPT-5.6 Sol) in the development and checking of analytic derivations, numerical studies and code, literature analysis, and preparation of the
manuscript. The scientific direction, interpretation of the results, and final responsibility for the content remain with the author.

\clearpage
\appendix
\titleformat{\section}
  {\normalfont\Large\bfseries}
  {Appendix \thesection:}
  {0.5em}
  {}
\section{Higher-order local expansions}
\label{app:local-higher-order}

This appendix derives the model-specific higher-order coefficients
quoted in Sec.~\ref{sec:examples}.  The calculation is a perturbative
solution of the local stationarity equations in powers of $1/Q$,
followed by Gaussian moment contractions.  No fitted numerical
coefficient enters the derivation.

\subsection{General perturbative expansion}
\label{app:local-general}

Choose local coordinates such that the true template is at
$\bm\theta=\bm0$, and write
\begin{equation}
  Z_Q(\bm\theta)
  =
  Q\,\rho(\bm\theta)+W(\bm\theta),
  \qquad
  \rho(\bm0)=1,
  \qquad
  \nabla\rho(\bm0)=\bm0,
\end{equation}
with
\begin{equation}
  -\partial_i\partial_j\rho(\bm0)=G_{ij}.
\end{equation}
At the true point define the noise derivatives
\begin{equation}
  g_i=\partial_iW,
  \qquad
  H_{ij}=\partial_i\partial_jW,
  \qquad
  T_{ijk}=\partial_i\partial_j\partial_kW,
\end{equation}
and the deterministic overlap tensors
\begin{equation}
  R_{ijk}
  =
  \partial_i\partial_j\partial_k\rho,
  \qquad
  R_{ijkl}
  =
  \partial_i\partial_j\partial_k\partial_l\rho,
\end{equation}
all evaluated at $\bm\theta=\bm0$.

Since $\widehat{\bm\theta}=O_{\rm p}(Q^{-1})$, write
\begin{equation}
  \widehat{\bm\theta}
  =
  \frac{\bm a}{Q}
  +\frac{\bm b}{Q^2}
  +\frac{\bm c}{Q^3}
  +O_{\rm p}(Q^{-4}).
  \label{eq:app-theta-series}
\end{equation}
Expanding
$\nabla Z_Q(\widehat{\bm\theta})=\bm0$ and matching successive powers
of $1/Q$ gives
\begin{align}
  \bm a
  &=
  G^{-1}\bm g,
  \label{eq:app-a}
  \\
  \bm b
  &=
  G^{-1}
  \left\{
    H\bm a
    +\frac12 R_3[\bm a,\bm a]
  \right\},
  \label{eq:app-b}
  \\
  \bm c
  &=
  G^{-1}
  \left\{
    H\bm b
    +\frac12 T[\bm a,\bm a]
    +R_3[\bm a,\bm b]
    +\frac16R_4[\bm a,\bm a,\bm a]
  \right\}.
  \label{eq:app-c}
\end{align}
Here, for example,
$[R_3[\bm u,\bm v]]_i=R_{ijk}u_jv_k$ and
$[T[\bm u,\bm v]]_i=T_{ijk}u_jv_k$; repeated indices are summed.

It is also useful to expand the increase of the fitted peak above its
value at the true fixed template,
\begin{equation}
  \delta
  \equiv
  \widehat Q-Z_{\rm fix}
  =
  \frac{\delta_1}{Q}
  +\frac{\delta_2}{Q^2}
  +\frac{\delta_3}{Q^3}
  +O_{\rm p}(Q^{-4}).
  \label{eq:app-delta-series}
\end{equation}
Using $\bm g=G\bm a$, the first three coefficients reduce to
\begin{align}
  \delta_1
  &=
  \frac12\bm a^TG\bm a,
  \label{eq:app-delta1}
  \\
  \delta_2
  &=
  \frac12\bm a^TH\bm a
  +\frac16R_3[\bm a,\bm a,\bm a],
  \label{eq:app-delta2}
  \\
  \delta_3
  &=
  -\frac12\bm b^TG\bm b
  +\bm a^TH\bm b
  +\frac12R_3[\bm a,\bm a,\bm b]
  \notag\\
  &\hspace{1.5cm}
  +\frac16T[\bm a,\bm a,\bm a]
  +\frac1{24}R_4[\bm a,\bm a,\bm a,\bm a].
  \label{eq:app-delta3}
\end{align}
Equations~\eqref{eq:app-a}--\eqref{eq:app-delta3} are the common
starting point for the two examples below.

\subsection{One-dimensional Gaussian location model}
\label{app:d1-higher-order}

In the metric-whitened coordinate
\begin{equation}
  t=\frac{m-m_0}{\sqrt2\,\sigma},
\end{equation}
the deterministic overlap and the covariance of the noise field are
both
\begin{equation}
  \rho(t)=e^{-t^2/2},
  \qquad
  \E[W(t)W(t')]
  =
  e^{-(t-t')^2/2}.
  \label{eq:app-d1-kernel}
\end{equation}
Define the derivatives at the true point
\begin{equation}
  X=W(0),\quad
  A=W'(0),\quad
  B=W''(0),\quad
  C=W^{(3)}(0),\quad
  D_4=W^{(4)}(0),\quad
  E_5=W^{(5)}(0).
\end{equation}
Their covariances follow by differentiating
Eq.~\eqref{eq:app-d1-kernel}.  The nonzero pairings needed below are
\begin{align}
  \E[A^2]&=1,
  &
  \E[AC]&=-3,
  &
  \E[AE_5]&=15,
  \notag\\
  \E[B^2]&=3,
  &
  \E[BD_4]&=-15,
  &
  \E[C^2]&=15,
  \label{eq:app-d1-pairings}
  \\
  \E[XB]&=-1,
  &
  \E[XD_4]&=3.
  \notag
\end{align}
All required higher moments then follow from Isserlis' theorem.

The stationarity equation is
\begin{equation}
  0
  =
  -Q\widehat t\,e^{-\widehat t^2/2}
  +A+B\widehat t
  +\frac12C\widehat t^2
  +\frac16D_4\widehat t^3+\cdots.
  \label{eq:app-d1-stationarity}
\end{equation}
Solving iteratively gives, through the order needed to expose the first
curvature correction,
\begin{equation}
  \widehat t
  =
  \frac{A}{Q}
  +\frac{AB}{Q^2}
  +\frac{A(A^2+AC+2B^2)}{2Q^3}
  +O_{\rm p}(Q^{-4}).
  \label{eq:app-d1-that}
\end{equation}
Substitution into the peak height gives
\begin{align}
  \delta
  &=
  \frac{A^2}{2Q}
  +\frac{A^2B}{2Q^2}
  \notag\\
  &\quad
  +\frac{1}{Q^3}
  \left(
    \frac{A^4}{8}
    +\frac{A^3C}{6}
    +\frac{A^2B^2}{2}
  \right)
  +O_{\rm p}(Q^{-4}).
  \label{eq:app-d1-delta3-explicit}
\end{align}
For example,
\begin{align}
  \E[\delta_3]
  &=
  \frac18\E[A^4]
  +\frac16\E[A^3C]
  +\frac12\E[A^2B^2]
  \notag\\
  &=
  \frac18(3)
  +\frac16(-9)
  +\frac12(3)
  =
  \frac38,
  \label{eq:app-d1-three-eighths}
\end{align}
which is the $1/Q^3$ coefficient quoted in
Eq.~\eqref{eq:d1-bias-nlo}.

Continuing the same recursion by two further orders gives the
coefficient of $Q^{-5}$ in $\delta$ as
\begin{align}
  \delta_5
  ={}&
  \frac{5A^6}{48}
  +\frac{A^5C}{4}
  +\frac{A^5E_5}{120}
  +\frac{5A^4B^2}{4}
  +\frac{A^4BD_4}{6}
  \notag\\
  &+
  \frac{A^4C^2}{8}
  +A^3B^2C
  +\frac{A^2B^4}{2}.
  \label{eq:app-d1-delta5}
\end{align}
Applying the same pairings gives
\begin{equation}
  \E[\delta_1]=\frac12,
  \qquad
  \E[\delta_2]=0,
  \qquad
  \E[\delta_3]=\frac38,
  \qquad
  \E[\delta_4]=0,
  \qquad
  \E[\delta_5]=\frac{25}{16}.
  \label{eq:app-d1-delta-means}
\end{equation}
Since $\E[Z_{\rm fix}]=Q$, this reproduces
\begin{equation}
  \E[\widehat Q-Q]
  =
  \frac{1}{2Q}
  +\frac{3}{8Q^3}
  +\frac{25}{16Q^5}
  +O(Q^{-7}).
\end{equation}

The same solution for $\widehat t$ yields
\begin{equation}
  \E[\widehat t^2]
  =
  \frac{1}{Q^2}
  +\frac{3}{Q^4}
  +\frac{45}{2Q^6}
  +O(Q^{-8}),
\end{equation}
while
\begin{equation}
  \Delta q
  =
  2Z_{\rm fix}\delta+\delta^2,
  \qquad
  Z_{\rm fix}=Q+X,
\end{equation}
together with the additional pairings involving $X$ in
Eq.~\eqref{eq:app-d1-pairings}, gives
\begin{equation}
  \E[\Delta q]
  =
  1+\frac{1}{2Q^2}
  +\frac{2}{Q^4}
  +O(Q^{-6}).
\end{equation}
Thus all three one-dimensional expansions used in
Sec.~\ref{sec:d1} follow from the same local perturbative solution.

\subsection{Two-dimensional Gaussian location--scale model}
\label{app:d2-higher-order}

For the location--scale example let
$\bm\theta=(m,\eta)$, with $m$ measured in units of $\sigma_0$ as in
Sec.~\ref{sec:d2}.  The covariance kernel of the normalized Gaussian
template field is
\begin{align}
  K\!\left((m,\eta),(m',\eta')\right)
  ={}&
  \left[
    \frac{2e^{\eta+\eta'}}
         {e^{2\eta}+e^{2\eta'}}
  \right]^{1/2}
  \notag\\
  &\times
  \exp\left[
    -\frac{(m-m')^2}
           {2(e^{2\eta}+e^{2\eta'})}
  \right].
  \label{eq:app-d2-kernel}
\end{align}
Setting $(m',\eta')=(0,0)$ recovers the deterministic overlap
Eq.~\eqref{eq:d2-overlap}.  At the true point,
\begin{equation}
  G=\frac12 I_2.
\end{equation}
The nonzero third- and fourth-order deterministic derivatives needed at
this order are
\begin{align}
  R_{mm\eta}
  &=
  \frac12,
  \label{eq:app-d2-R3}
  \\
  R_{mmmm}
  &=
  \frac34,
  &
  R_{mm\eta\eta}
  &=
  \frac14,
  &
  R_{\eta\eta\eta\eta}
  &=
  \frac74,
  \label{eq:app-d2-R4}
\end{align}
together with permutations of the indices.

All noise-derivative covariances are generated directly from
Eq.~\eqref{eq:app-d2-kernel}:
\begin{equation}
  \Cov\!\left(
    \partial^\alpha W,
    \partial^\beta W
  \right)
  =
  \left.
  \partial_{\bm\theta}^{\alpha}
  \partial_{\bm\theta'}^{\beta}
  K(\bm\theta,\bm\theta')
  \right|_{\bm\theta=\bm\theta'=\bm0}.
  \label{eq:app-d2-derivative-cov}
\end{equation}
For example, Eq.~\eqref{eq:app-b} gives
\begin{equation}
  b_\eta
  =
  4g_\eta H_{\eta\eta}
  +4g_mH_{m\eta}
  +2g_m^2.
  \label{eq:app-d2-beta}
\end{equation}
Differentiating the kernel gives
\begin{equation}
  \E[g_m^2]=\frac12,
  \qquad
  \E[g_mH_{m\eta}]=-\frac12,
  \qquad
  \E[g_\eta H_{\eta\eta}]=0,
\end{equation}
and hence
\begin{equation}
  \E[b_\eta]=-1.
\end{equation}
Since the leading coefficient $a_\eta$ has zero mean,
Eq.~\eqref{eq:app-theta-series} therefore gives
\begin{equation}
  \E[\widehat\eta]
  =
  -\frac{1}{Q^2}
  +O(Q^{-4}).
\end{equation}

For the remaining observables it is most compact to quote the Gaussian
contractions generated by
Eqs.~\eqref{eq:app-d2-kernel}--\eqref{eq:app-d2-derivative-cov}.
With $X=W(\bm0)$, the coefficients of the general expansion satisfy
\begin{align}
  \E[a_m^2]
  &=
  \E[a_\eta^2]
  =
  2,
  &
  \E[a_mb_m]
  &=
  \E[a_\eta b_\eta]
  =
  0,
  \notag\\
  \E[b_m]
  &=
  0,
  &
  \E[b_\eta]
  &=
  -1,
  \label{eq:app-d2-contractions-a}
  \\
  \E[b_m^2]
  &=
  12,
  &
  \E[b_\eta^2]
  &=
  23,
  &
  \E[a_mc_m]
  &=
  \E[a_\eta c_\eta]
  =
  0,
  \notag\\
  \E[\delta_1]
  &=
  1,
  &
  \E[\delta_2]
  &=
  0,
  &
  \E[\delta_3]
  &=
  \frac94,
  \label{eq:app-d2-contractions-b}
  \\
  \E[X\delta_2]
  &=
  -1,
  &
  \E[\delta_1^2]
  &=
  2.
  \notag
\end{align}
These are direct Wick contractions of the derivative variables; no
simulation input is used.

The amplitude result follows immediately from
Eq.~\eqref{eq:app-delta-series}:
\begin{equation}
  \E[\widehat Q-Q]
  =
  \frac1Q+\frac{9}{4Q^3}+O(Q^{-5}).
\end{equation}
For the profile gain,
\begin{equation}
  \Delta q
  =
  2(Q+X)\delta+\delta^2,
\end{equation}
so its $Q^{-2}$ coefficient is
\begin{align}
  2\E[\delta_3]
  +2\E[X\delta_2]
  +\E[\delta_1^2]
  &=
  2\left(\frac94\right)+2(-1)+2
  \notag\\
  &=
  \frac92.
\end{align}
Therefore
\begin{equation}
  \E[\Delta q]
  =
  2+\frac{9}{2Q^2}+O(Q^{-4}).
\end{equation}

Finally, because $G^{1/2}=I_2/\sqrt2$,
\begin{equation}
  \bm y_{\rm loc}
  =
  \frac{1}{\sqrt2}
  \left(
    \bm a+\frac{\bm b}{Q}+\frac{\bm c}{Q^2}
  \right)
  +O_{\rm p}(Q^{-3}).
\end{equation}
Using Eqs.~\eqref{eq:app-d2-contractions-a} and
\eqref{eq:app-d2-contractions-b}, including the subtraction of the
nonzero mean in the $\eta$ direction, gives
\begin{align}
  \Var(y_m)
  &=
  1+\frac{6}{Q^2}+O(Q^{-4}),
  \\
  \Var(y_\eta)
  &=
  1+\frac{11}{Q^2}+O(Q^{-4}),
\end{align}
whereas the uncentered localization radius satisfies
\begin{equation}
  \E[r_{\rm loc}^2]
  =
  2+\frac{35}{2Q^2}+O(Q^{-4}).
\end{equation}
This reproduces all of the location--scale coefficients quoted in
Sec.~\ref{sec:d2} and makes explicit which results are fixed by the
leading metric and which first depend on higher derivatives of the
template manifold.

\section{Short-distance principal-coordinate expansion}
\label{app:short-distance}

This appendix records the small-separation calculation underlying
Eq.~\eqref{eq:g2-quartic}.  Write
\begin{equation}
  x=\frac{r}{\sigma},
  \qquad
  \rho(x)=e^{-x^2/4}.
\end{equation}

After conditioning on $\bm S=(Z'(0),Z'(r))=\bm 0$, introduce the height
eigenmodes of Eq.~\eqref{eq:VW}.  The origin of the strong suppression
of the antisymmetric mode can be seen explicitly.  Writing
\begin{equation}
  S_+
  =
  \frac{Z'(0)+Z'(r)}{\sqrt2},
\end{equation}
its variance is
\begin{equation}
  d_+(x)
  \equiv
  \Var(S_+)
  =
  \frac{1}{\sigma^2}
  \left[
    \frac12+
    \left(
      \frac12-\frac{x^2}{4}
    \right)\rho(x)
  \right].
\end{equation}
Since
\begin{equation}
  W=\frac{Z(0)-Z(r)}{\sqrt2},
  \qquad
  \Var(W)=1-\rho(x),
\end{equation}
and
\begin{equation}
  \Cov(W,S_+)
  =
  -\frac{x}{2\sigma}\rho(x),
\end{equation}
conditioning on the two vanishing slopes gives
\begin{equation}
  \Lambda_-
  =
  \Var(W\mid\bm S=\bm 0)
  =
  1-\rho(x)
  -
  \frac{x^2\rho(x)^2}
       {4\sigma^2 d_+(x)}.
  \label{eq:app-lambda-minus-exact}
\end{equation}
Expanding this expression for small $x$, the terms of order $x^2$
and $x^4$ cancel exactly.  Together with the corresponding expansion
of the symmetric mode, one obtains
\begin{align}
  \Lambda_+
  &=
  \frac43
  -\frac{x^2}{18}
  +\frac{x^4}{216}
  -\frac{x^6}{3456}
  +O(x^8),
  \label{eq:app-lambda-plus}
  \\
  \Lambda_-
  &=
  \frac{x^6}{384}
  +\frac{x^8}{3072}
  +O(x^{10}).
  \label{eq:app-lambda-minus}
\end{align}
Thus
\begin{equation}
  \frac{\Lambda_-}{\Lambda_+}
  =
  \frac{x^6}{512}
  +O(x^8).
  \label{eq:app-lambda-ratio}
\end{equation}

For the curvature eigenmodes
\begin{equation}
  H_\pm
  =
  \frac{H_0\pm H_r}{\sqrt2},
\end{equation}
the regressions on the corresponding height modes are
\begin{align}
  \beta_+
  &=
  \frac1{\sigma^2}
  \left[
    \frac{x^2}{16}
    -\frac{x^4}{192}
    +O(x^6)
  \right],
  \\
  \beta_-
  &=
  \frac1{\sigma^2}
  \left[
    \frac{12}{x^2}
    -1
    +\frac{7x^2}{80}
    +O(x^4)
  \right].
\end{align}
After also conditioning on the two heights, the residual curvature
eigenvariances are
\begin{align}
  V_+
  &=
  \frac1{\sigma^4}
  \left[
    \frac{x^4}{48}
    -\frac{x^6}{384}
    +O(x^8)
  \right],
  \\
  V_-
  &=
  \frac1{\sigma^4}
  \left[
    \frac{x^6}{1920}
    -\frac{x^8}{15360}
    +O(x^{10})
  \right].
\end{align}
Hence
\begin{equation}
  \frac{V_-}{V_+}
  =
  \frac{x^2}{40}
  +O(x^4),
  \qquad
  \rho_H
  =
  1-\frac{x^2}{20}
  +O(x^4).
  \label{eq:app-curvature-ratio}
\end{equation}

The derivative covariance eigenvalues behave as
\begin{align}
  d_+
  &=
  \frac1{\sigma^2}
  \left[
    1-\frac{3x^2}{8}
    +O(x^4)
  \right],
  \\
  d_-
  &=
  \frac1{\sigma^2}
  \left[
    \frac{3x^2}{8}
    -\frac{5x^4}{64}
    +O(x^6)
  \right],
\end{align}
so the zero-slope density is
\begin{equation}
  p_{\bm S}(\bm 0)
  \sim
  \frac{\sigma^2\sqrt6}{3\pi x}.
  \label{eq:app-pS}
\end{equation}

Standardize the height modes as
\begin{equation}
  V=\sqrt{\Lambda_+}\,z,
  \qquad
  W=\sqrt{\Lambda_-}\,z_-.
\end{equation}
Generic $z_-=O(1)$ produces opposite curvature shifts of order $x$,
which overwhelm the common $O(x^2)$ curvature fluctuation and prevent
both points from being maxima.  The simultaneous-maxima contribution
therefore comes from
\begin{equation}
  z_-=x\,y,
  \qquad
  y=O(1).
  \label{eq:app-zminus-scaling}
\end{equation}
To leading order,
\begin{align}
  V
  &=
  \frac{2}{\sqrt3}z+O(x^2),
  \\
  W
  &=
  \frac{x^4}{8\sqrt6}y+O(x^6).
\end{align}
If $\xi\sim N(0,1)$ denotes the leading symmetric residual-curvature
fluctuation, the physical curvatures satisfy
\begin{align}
  \frac{\sigma^2H_0}{x^2}
  &=
  \frac{z+2\xi}{8\sqrt6}
  +\frac{\sqrt3}{4}y
  +o(1),
  \\
  \frac{\sigma^2H_r}{x^2}
  &=
  \frac{z+2\xi}{8\sqrt6}
  -\frac{\sqrt3}{4}y
  +o(1).
  \label{eq:app-curvatures}
\end{align}
The $1/x$ factor in Eq.~\eqref{eq:app-pS} is cancelled by the
$dz_-=x\,dy$ phase-space factor, while the product of the two positive
curvatures contributes $x^4$.  This proves the quartic scaling.

For completeness, the leading coefficient can also be written in
closed form.  Let
\begin{equation}
  z_0=\sqrt{\frac32}\,u
\end{equation}
and
\begin{align}
  I(z_0)
  ={}&
  \phi(z_0)(z_0^2+14)\Phi(z_0/2)
  +2z_0\phi(z_0)\phi(z_0/2)
  \nonumber\\
  &+
  25\sqrt{\frac{2}{5\pi}}\,
  \overline\Phi\!\left(
    \frac{\sqrt5}{2}z_0
  \right).
  \label{eq:app-I}
\end{align}
Then
\begin{equation}
  R_2(r;u)
  \sim
  \frac{A(u)}{\sigma^2}
  \left(\frac{r}{\sigma}\right)^4,
  \qquad
  A(u)
  =
  \frac{I(z_0)}
       {1728\sqrt6\,\pi^{3/2}}.
  \label{eq:app-A}
\end{equation}
Since $\lambda(u)$ has dimensions $1/\sigma$, the coefficient in
Eq.~\eqref{eq:g2-quartic} is
\begin{equation}
  \boxed{
  C(u)
  =
  \frac{A(u)}
       {[\sigma\lambda(u)]^2}.
  }
  \label{eq:app-C}
\end{equation}

\bibliographystyle{unsrturl}
\bibliography{references}

\end{document}